\definecolor{lcolor}{rgb}{0.6,0.3,0.3}
\newcommand{\tc}[2]{#2}
\newcommand{\p}{\partial}
\newcommand{\R}{\mathbbm{R}}
\newcommand{\C}{\mathbbm{C}}
\newcommand{\eps}{\epsilon}
\newcommand{\step}{\vskip 3mm}
\DeclareMathOperator{\image}{image}
\DeclareMathOperator{\Res}{Res}
\renewcommand{\subset}{\subseteq}
\newtheoremstyle{mytheoremstyle}
{14pt}
{14pt}
{\itshape}
{20pt}
{\bfseries}
{.}
{.5em}
{}
\newtheoremstyle{myremarkstyle}
{10pt}
{10pt}
{}
{20pt}
{\itshape}
{.}
{.5em}
{}
\newtheoremstyle{myproofstyle}
{12pt}
{12pt}
{}
{20pt}
{\itshape}
{.}
{.5em}
{}
\theoremstyle{mytheoremstyle}
\newtheorem{theorem}{Theorem}
\newtheorem{definition}[theorem]{Definition}
\newtheorem{lemma}[theorem]{Lemma}
\newtheorem{corollary}[theorem]{Corollary}
\newcounter{assp}
\newtheorem{assp}[assp]{Assumption}
\theoremstyle{myremarkstyle}
\newtheorem{remark}{Remark}
\theoremstyle{myproofstyle}
\DeclareMathOperator{\Frac}{Frac}
\newcommand{\JB}{{J_{\bullet}}}
\newcommand{\KB}{{K_{\bullet}}}
\newcommand{\down}{\tc{hcolor}{\alpha}}
\newcommand{\up}{\tc{hcolor}{\beta}}
\newcommand{\pbc}{\tc{hcolor}{\gamma}}
\newcommand{\polyann}{\tc{scolor}{\text{poly}}}
\newcommand{\modp}{\mod \polyann}
\newcommand{\Dw}{\widetilde{D}}
\title{\boldmath Scattering amplitude annihilators}
\author[a]{Andrea N\"utzi}
\author[b]{and Michael Reiterer}
\affiliation[a]{ETH Zurich, Switzerland}
\affiliation[b]{The Hebrew University of Jerusalem, Israel}
\emailAdd{andrea.nuetzi@math.ethz.ch}
\emailAdd{michael.reiterer@protonmail.com}
\abstract{Several second order differential operators are shown
to annihilate the YM and GR tree scattering amplitudes.
In particular we prove a conjecture of Loebbert, Mojaza and Plefka
from their investigation of a hidden conformal symmetry in GR.
}
\begin{document} 
\maketitle
\flushbottom

\section{Introduction}
The scattering amplitudes of Yang-Mills (YM) and general relativity (GR)
are interesting objects in physics.
The dimension-neutral tree amplitudes are rational functions of many variables;
see Appendix \ref{app:formulas} for examples.
As with other special functions,
it is natural to ask what differential identities they satisfy.
We show that a number of second order differential
operators annihilate the dimension-neutral YM and GR amplitudes.
\step
This paper is motivated by recent work of
Loebbert, Mojaza and Plefka \cite{lmp}.
They investigated a potential hidden conformal symmetry of GR tree amplitudes
in general spacetime dimension $d$. 
This led them to conjecture certain identities
for the dimension-neutral amplitudes
that they verified by explicit calculation for
amplitudes with $n = 3,4,5,6$ legs\footnote{%
We thank 
F.~Loebbert, M.~Mojaza and J.~Plefka for
making available computer code to check the $n=5$ case
and for answering questions we had about their paper \cite{lmp}.
}.
They conjectured that they would continue to hold for $n \geq 7$.
This paper contains:
\begin{itemize}
\item A
reformulation of these identities as proper differential identities for the amplitudes.
The formulation in \cite{lmp} is more intricate,
as discussed in Appendix \ref{app:translation}.
\item A proof of these identities for all $n \geq 3$, by induction on $n$.
\end{itemize}
We do not take up the interesting theme of hidden conformal symmetry \cite{lmp}.
The identities conjectured in \cite{lmp}
and proved here do not readily imply identities for the amplitudes
in specific dimensions such as $d=4$,
an interesting question that we leave open.
\step
In this paper we exclusively work with the dimension-neutral version of the tree amplitudes;
the spacetime dimension $d$ will be absent. These amplitudes
are rational functions on a complex vector space of dimension $2n(n-2)$
with simple poles along a constellation of hyperplanes.
The coordinates on this vector space will be denoted
$k_{ij}$, $c_{ij}$, $e_{ij}$
with subscripts running over an index set\footnote{%
The YM amplitude requires a cyclic order on $I$,
whereas the GR amplitude is permutation invariant and $I$ is an unordered set.
We will not mention this further in this introduction.}
$I$ with $|I| = n$.
The following linear relations among the coordinates cut the dimension of this vector space
down to $2n(n-2)$:
\begin{equation}\label{eq:linrel}
\begin{aligned}
k_{ii} & = 0 &\qquad k_{ij} - k_{ji} & = 0 &\qquad \textstyle\sum_{i \in I} k_{ij} & = 0\\
c_{ii} & = 0 && & \textstyle\sum_{i\in I} c_{ij} & = 0 \\
e_{ii} & = 0 & e_{ij} - e_{ji} & = 0
\end{aligned}
\end{equation}
The amplitudes are actually polynomial in the $c_{ij}$ and $e_{ij}$ variables.
To obtain the amplitudes in $d$ spacetime dimensions set
$k_{ij} = k_i \cdot k_j$ and $c_{ij} = k_i \cdot \eps_j$
and $e_{ij} = (1-\delta_{ij})\eps_i \cdot \eps_j$
where $k_i$ and $\eps_i$ are $d$-dimensional vectors,
the momentum and polarization of the $i$-th particle.
\vskip 2mm
We need a workable definition of the amplitudes for general $n$.
For the purpose of this paper, the amplitudes are defined by the recursion in Definition \ref{def:rec}.
This recursion is based on the factorization of residues
formulated directly in $k_{ij}$, $c_{ij}$, $e_{ij}$ variables\footnote{%
The original recursion of this kind is BCFW recursion \cite{bcfw}.
The recursion we use in this paper does not involve BCFW shifts
nor decay conditions under such shifts,
it is only required that the amplitudes factor properly.
The proof that this determines the amplitudes uniquely is along the lines of \cite{nr},
which is for $d=4$, but is simpler in the present dimension-neutral case since
one is on a vector space and all poles are along linear subspaces.
It is useful in this paper to have the uniqueness 
statement in a simple form since the same
argument (cf.~Lemmas \ref{lemma:uniq} and \ref{lemma:vanish})
also plays a key role in proof of the main theorem, Theorem \ref{theorem:main}.}.
On the face of it, it is neither obvious that this recursion
admits a solution, nor that the solution is unique.
A sketch of existence using Feynman rules is in Remark \ref{remark:existence},
but since we are unaware of a reference that spells
this out in detail for the dimension-neutral amplitudes,
we logically state existence as Assumption \ref{asspa} below.
Uniqueness is shown in Lemma \ref{lemma:uniq}.
\begin{assp}\label{asspa}
A sequence of rational functions
satisfying the recursion in Definition \ref{def:rec} exists.
We refer to them as the YM respectively GR amplitudes.
\end{assp}
In stating our main result below,
we avoid standard notation for partial derivatives
such as $\p_{k_{ij}}$, $\p_{c_{ij}}$, $\p_{e_{ij}}$
because it is ambiguous how these act on functions
on the vector space defined by \eqref{eq:linrel}.
To illustrate, if $I = \{1,\ldots,5\}$
then $c_{12}$ and $-c_{32}-c_{42}-c_{52}$
are two representatives of one the same function,
but if we act on them with $\p_{c_{12}}$ using a naive interpretation of
the partial derivative, then we obtain conflicting results, $1$ and $0$ respectively.
To fix this
we introduce new notation $D_{k_{ij}}$, $D_{c_{ij}}$, $D_{e_{ij}}$
which are differential operators
that act unambiguously on functions on the vector space defined by \eqref{eq:linrel}.
To illustrate, if $I = \{1,\ldots,5\}$ 
then we use
$D_{c_{12}} = \p_{c_{12}} - \tfrac14(\p_{c_{12}} + \p_{c_{32}} + \p_{c_{42}} + \p_{c_{52}})$
which acts unambiguously.
The $D_{k_{ij}}$ are linear combinations of the $\p_{k_{ij}}$
with constant coefficients;
the $D_{c_{ij}}$ are linear combinations of the $\p_{c_{ij}}$
with constant coefficients;
the $D_{e_{ij}}$ are linear combinations of the $\p_{e_{ij}}$
with constant coefficients.
Their complete definition is in \mbox{Lemma \ref{lemma:dops}},
and we have also included computer code in Appendix \ref{app:code}.
We note that our definition of the $D$-operators is
also a matter of convention;
we have merely made a convenient choice
of operators that span the $2n(n-2)$-dimensional
space of constant first order differential operators along
the vector space defined by \eqref{eq:linrel}.
\step
There are $2n+1$ first order differential operators that
are well-known to annihilate the amplitudes:
$n$ of them express gauge invariance;
$n$ express homogeneity in the polarizations;
and one expresses homogeneity jointly in all momenta.
They are, respectively,
\begin{subequations}\label{eq:xyz}
\begin{align}
X_i & = \textstyle\sum_{j \in I}(
k_{ji}D_{c_{ji}} + c_{ij}D_{e_{ij}})\\
Y_i & = \textstyle\sum_{j\in I}(c_{ji} D_{c_{ji}} + e_{ij} D_{e_{ij}}) - h\\
Z & = \textstyle\sum_{i,j\in I}(k_{ij} D_{k_{ij}} + c_{ij} D_{c_{ij}}) - s
\end{align}
\end{subequations}
where $h = 1$ and $s = 4-n$ for YM respectively $h = 2$ and $s = 2$ for GR.
Polynomiality in the $c_{ij}$ and $e_{ij}$ variables
implies further obvious annihilators.
There are, in addition,
several second order annihilators as we show in this paper.
\begin{theorem}\label{theorem:main}
Define the YM and GR tree amplitude by the recursion in Definition \ref{def:rec}
and make Assumption \ref{asspa}.
The differential operators $A_i$, $B_i$, $C_i$
defined below annihilate the tree amplitudes
for all $n = |I| \geq 3$ and $i \in I$:
\begin{subequations}\label{eqs:abc}
\begin{equation}\label{id1}
A_i = \textstyle\sum_{j,k \in I}
\big(
  \tfrac12 k_{jk} D_{c_{ji}} D_{c_{ki}}
+ c_{jk} D_{c_{ji}} D_{e_{ki}}
+ \tfrac12 e_{jk} D_{e_{ji}} D_{e_{ki}}
\big)
\end{equation}
and
\begin{multline} \label{eqs:b}
B_i =
\textstyle\sum_{j,k \in I}\big(
  c_{jk} D_{k_{ij}} D_{e_{ik}}
+ e_{jk} D_{c_{ij}} D_{e_{ik}}
+ k_{jk} D_{k_{ij}} D_{c_{ki}}
+ c_{kj} D_{c_{ij}} D_{c_{ki}}\\
- c_{kj} D_{k_{jk}} D_{e_{ji}}
- e_{jk} D_{c_{jk}} D_{e_{ji}}
- k_{jk} D_{k_{jk}} D_{c_{ji}}
- c_{jk} D_{c_{ji}} D_{c_{jk}}
\big)
\end{multline}
and
$C_i = \widetilde{C}_i - \textstyle\frac{1}{n} \sum_{j \in I} \widetilde{C}_j$
where
\begin{multline} \label{eqs:c}
\widetilde{C}_i =
\textstyle\sum_{j,k \in I}\big(
\tfrac12 e_{jk} D_{c_{ij}} D_{c_{ik}}
+ c_{kj} D_{k_{ik}} D_{c_{ij}}
+ \tfrac12 k_{jk} D_{k_{ij}} D_{k_{ik}}\\
- e_{jk} D_{c_{ij}} D_{c_{jk}}
- c_{kj} D_{k_{jk}} D_{c_{ij}} 
- c_{jk} D_{k_{ji}} D_{c_{jk}}
- k_{jk} D_{k_{ji}} D_{k_{jk}}
\big)
\end{multline}
\end{subequations}
The $D$-operators are defined in Lemma \ref{lemma:dops}.
\end{theorem}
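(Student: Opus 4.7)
\medskip
\noindent\textbf{Proof proposal.}
The natural approach is induction on $n = |I|$. For the base case $n = 3$, the amplitudes reduce to explicit low-degree polynomials in the $c$ and $e$ variables (momentum conservation makes the $k_{ij}$-dependence trivial), so the identities can be verified by direct computation. For the inductive step, assume the claim holds for all index sets of size less than $n$, fix one of the operators $O \in \{A_i, B_i, C_i\}$, and aim to show $O M_n = 0$.

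The plan is to apply the vanishing criterion underlying Lemmas \ref{lemma:uniq} and \ref{lemma:vanish}: it suffices to check that $O M_n$ has no poles along any factorization hyperplane $\{k_{IJ}^2 = 0\}$ and has the appropriate homogeneity. Near such a hyperplane, the recursion of Definition \ref{def:rec} writes $M_n$ as a sum over cuts of products $M_L \cdot M_R / k_{IJ}^2$, with an internal polarization summed out. Applying the second-order operator $O$ by Leibniz produces three types of contributions: (a) both derivatives on the same lower-point factor; (b) one derivative on each factor, i.e.\ cross terms; and (c) derivatives hitting the pole $1/k_{IJ}^2$, which can create double poles in the $k$-variables.

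Terms of type (a) should vanish by the inductive hypothesis, once one verifies that the restrictions of $A_i$, $B_i$, $C_i$ to each side of the cut match the corresponding operators on the smaller index sets, with the on-shell internal momentum and polarization playing the role of an extra leg. Types (b) and (c) must then cancel. The key mechanism, I expect, is that the cross terms and pole terms organize into contractions of the first-order annihilators \eqref{eq:xyz} with the lower-point amplitudes: the precise combinations of $D_k D_c$, $D_c D_c$, $D_c D_e$, etc., that appear in $A_i$, $B_i$, $C_i$ should be exactly what is needed so that, on the cut locus, type (b) reassembles into terms proportional to $X_j M_L$ or $X_j M_R$ (killed by gauge invariance of the lower amplitudes), and type (c) likewise reduces to a first-order annihilator acting on one side. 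Note that $A_i$ contains no $D_k$, so no double poles of type (c) arise there; the $B_i$ and $C_i$ cases are correspondingly more involved.

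The main obstacle will be the combinatorial bookkeeping of how the index sums in \eqref{id1}, \eqref{eqs:b}, \eqref{eqs:c} split across a cut $I = J \sqcup J^c$, and the identification of a clean algebraic identity of the form $O = O_L + O_R + \text{(first-order contractions)} + \text{(pole terms)}$ valid on the factorization locus. Handling the subtraction $C_i = \widetilde C_i - \tfrac{1}{n}\sum_j \widetilde C_j$, and ensuring the correct signs and symmetrizations between the two sides of the cut, is where the bulk of the technical work sits. Once such identities are in hand for each of $A_i$, $B_i$, $C_i$, the inductive step closes and the uniqueness lemma finishes the proof.
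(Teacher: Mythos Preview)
Your overall architecture---induction on $n$, base case by direct computation, inductive step by showing $Of$ is pole-free and then invoking the vanishing lemma---is exactly the paper's strategy. But there is a genuine gap in how you plan to close the argument after establishing pole-freeness.

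The vanishing Lemma~\ref{lemma:vanish} requires not just homogeneity ($Y_i$, $Z$) but also gauge invariance: you need $X_i(Of) = 0$ for all $i$. You do not address this, and it is not automatic. One must compute the commutators $[X_i,A_j]$, $[X_i,B_j]$, $[X_i,C_j]$ in the Weyl algebra (the paper's Lemma~\ref{lemma:comms}). The result is that $[X_i,B_j]$ contains a term $-2\delta_{ij}A_i$, and $[X_i,C_j]$ contains a term proportional to $B_i$. Hence $X_i(B_jf)=0$ only after one already knows $A_if=0$, and $X_i(C_jf)=0$ only after one knows $B_if=0$. This forces a strict ordering \emph{within} each induction step: first prove all $A_i$ annihilate, then all $B_i$, then all $C_i$. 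Without this ordering the vanishing lemma cannot be invoked for $B_i$ or $C_i$, and the argument does not close.

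A secondary point: your expected mechanism for pole cancellation---that cross terms and pole terms reassemble into $X_j$ acting on a lower amplitude---underestimates what is needed. The paper's detailed calculation (Lemma~\ref{lemma:key}, Tables~\ref{table:ym} and~\ref{table:gr}) shows that showing pole-freeness of, say, $C_jf$ requires using not only $X_\bullet$, $Y_\bullet$, $Z$ on the lower amplitudes but also the second-order annihilators $A_\bullet$, $B_\bullet$, $C_\bullet$ \emph{on the internal leg}, which are available only by the induction hypothesis. So the induction hypothesis enters the pole analysis in a more essential way than ``type (a) terms vanish''; your clean trichotomy (a)/(b)/(c) does not survive contact with the actual identities.
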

That $B_i$, $C_i$ 
annihilate the GR amplitude 
was conjectured in \cite{lmp},
using a more intricate formulation
that we discuss in Appendix \ref{app:translation}.
The formulation of these identities
directly as annihilating differential operators
in Theorem \ref{theorem:main} is therefore new\footnote{%
It allows us, for instance, to compute a number of commutators
in Section \ref{sec:comms}.
They imply that every function annihilated by $X_i$, $Y_i$, $Z$, $C_i$
is automatically annihilated by $A_i$, $B_i$.}.
Our proof of Theorem \ref{theorem:main} is by induction on $n$
and is summarized later in this introduction.

\begin{remark} \label{remark:interpretation}
The amplitudes in $d \geq 4$ spacetime dimensions are obtained
from the dimension-neutral ones by setting
$k_{ij} = k_i \cdot k_j$ and $c_{ij} = k_i \cdot \eps_j$
and $e_{ij} = (1-\delta_{ij})\eps_i \cdot \eps_j$
where $k_i$ and $\eps_i$
are elements of a $d$-dimensional complex vector space,
and $\cdot$ is a nondegenerate symmetric bilinear pairing\footnote{%
For instance, $\C^d$ with standard pairing
$z\cdot w = z_1w_1 + \ldots + z_d w_d$.
The choice does not matter since all such vector spaces with nondegenerate pairing
are isomorphic.
}.
The momentum vectors $k_i$ are
subject to $k_i \cdot k_i = 0$ and momentum conservation $\sum_i k_i = 0$,
and the polarization vectors $\eps_i$ are subject to $k_i \cdot \eps_i = 0$.
Assume $k_i \neq 0$ here, so $\eps_i$ lies in a subspace of dimension $d-1$.
For every $i$, the YM amplitude is linear in $\eps_i$,
the GR amplitude is quadratic in $\eps_i$; this is witnessed by the annihilator $Y_i$.
For every $k \neq 0$ with $k \cdot k = 0$ abbreviate
$P(k) = \{ \eps \mid k \cdot \eps=0\}/\C k$
which is a vector space of dimension $d-2$,
and observe that $\cdot$ induces a nondegenerate symmetric bilinear pairing on $P(k)$.
Then, separately for every $i$ and fixed $k_i \neq 0$:
\begin{itemize}
\item As a function of $\eps_i$,
the YM amplitude descends to a linear form on $P(k_i)$.
\item As a function of $\eps_i$,
the GR amplitude descends to a quadratic form on $P(k_i)$\footnote{%
Equivalently, a linear form on the symmetric tensor product $S^2 P(k_i)$.
One can decompose this into the trace and the traceless part relative
to the symmetric bilinear pairing on $P(k_i)$.}.
\end{itemize}
This is witnessed by $X_i$ and is known as gauge invariance.
The amplitudes are also homogeneous jointly in all momenta,
as witnessed by $Z$.
In summary, the $d$-dimensional amplitudes are sections of certain vector bundles
on the projective
variety $k_i \cdot k_i = 0$ and $\sum_i k_i = 0$\footnote{%
Actually one only has a vector bundle away from the singular locus.
}\textsuperscript{,}\footnote{%
It would be interesting to see whether the annihilators
$A_i$, $B_i$, $C_i$ or suitable linear combinations of them imply annihilators
for the $d$-dimensional amplitudes.
The latter would be differential operators on the vector bundle
on which the amplitudes live,
possibly taking values in another vector bundle.
}\textsuperscript{,}\footnote{%
The fact that $A_i$ annihilates is vacuous for YM,
since $A_i$ involves two derivatives
with respect to the polarization of the $i$-th particle.
It is not vacuous for GR amplitudes,
in the sense that one can write down a rational function
annihilated by the obvious annihilators listed before Theorem \ref{theorem:main}
but not by $A_i$.
}.
In this paper we do not work with momenta $k_i$ and polarizations $\eps_i$.
Our primary variables are $k_{ij}$, $c_{ij}$, $e_{ij}$ and we exploit the fact that
the relations \eqref{eq:linrel} are linear.
\end{remark}

Let us outline the proof of Theorem \ref{theorem:main},
which is by brute force by induction on $n$.
It uses the recursive definition of the amplitudes,
by which the amplitudes have simple poles along a constellation of hyperplanes,
each residue being equal to a product of two lower amplitudes.
Denoting by $M_I$ the GR amplitude for index set $I$, 
schematically one has
\begin{equation}\label{eq:rmx}
\Res M_I \;=\; U^2 M_{J \sqcup \{\bullet\}} M_{K \sqcup \{\bullet\}}
\end{equation}
for every decomposition $I = J \sqcup K$ into two subsets of two or more elements,
$|J| \geq 2$ and $|K| \geq 2$.
The residue is taken along the locus where $k_J = \sum_{j \in J} k_j$
satisfies $k_J \cdot k_J = 0$.
Translating to the coordinates we use, this locus is the hyperplane
\begin{equation}\label{eq:kjj}
    \xi\;\stackrel{\textnormal{def}}{=}\; \textstyle\sum_{j,j' \in J} k_{jj'} = 0
\end{equation}
On the right hand side of \eqref{eq:rmx},
the bullet $\bullet$ stands for one more particle.
Properly defining the right hand side of \eqref{eq:rmx}
requires contracting the polarizations
of the respective $\bullet$ particles in $M_{J \sqcup \{\bullet\}}$
and $M_{K \sqcup \{\bullet\}}$.
This contraction is effected by $U^2$,
using the differential operator $U$ in Definition \ref{def:u}.
Suppose now we are at the $n$-th induction step.
Within each induction step, one first shows that the $A_i$ annihilate,
then the $B_i$, then the $C_i$.
The structure of the argument is analogous in each case,
so suppose for concreteness that we want to show $C_i M_I = 0$ for some $i$.
This is done in two steps:
\begin{enumerate}
\item Show that $C_i M_I$ has no pole. Namely,
for every decomposition $I = J \sqcup K$ as above one must check
that $C_i M_I$ has no pole along the hyperplane $\xi = 0$ in \eqref{eq:kjj}.
\item Show that the complete absence of poles in $C_iM_I$,
together with other known properties of $C_i M_I$ such as homogeneity
properties, imply $C_i M_I = 0$.
\end{enumerate}
In Step 1, note that $M_I$ has a simple $\xi^{-1}$ type pole,
so $C_i$ being second order, the 
application $C_i M_I$ can naively have a third order pole,
terms of type $\xi^{-3}$ and $\xi^{-2}$ and $\xi^{-1}$. 
While the $\xi^{-3}$ term is easily seen to be absent,
the absence of $\xi^{-2}$ and $\xi^{-1}$
terms is by a lengthy explicit calculation that exploits
only the annihilators of $M_{J \sqcup \{\bullet\}}$ and $M_{K \sqcup \{\bullet\}}$
that are known by the induction hypothesis.
In a nutshell, and using schematic notation again, in Step 1 one must check that
$C_i(\tfrac{1}{\xi} U^2 M_{J \sqcup \{\bullet\}} M_{K \sqcup \{\bullet\}})$
has no pole.
The computations establishing Step 1
are in the most technical lemma of this paper, Lemma \ref{lemma:key}.
Step 2 relies on Lemma \ref{lemma:vanish}
(which is the same lemma that also establishes that the recursion
determines the amplitudes uniquely)
and on the commutators in Lemma \ref{lemma:comms}.
The YM case is entirely analogous.

The proof does not use explicit formulas for the amplitudes.
Instead, the recursion for the amplitudes using \eqref{eq:rmx} is effectively turned
into a recursion for the annihilators.
\step
One can ask what the full annihilator of the YM respectively GR tree amplitudes is.
To every rational function one can associate its annihilator,
a left ideal in the Weyl algebra of polynomial differential operators.
Rational functions are holonomic,
meaning their annihilator is 
as big as allowed by the Bernstein inequality\footnote{%
This says that over $\R$ or $\C$,
and for a proper left ideal $I \subset D$ in the Weyl algebra $D$
in $x_1,\ldots,x_m,\p_1,\ldots,\p_m$,
the dimension of $D/I$
(defined to be the degree of a suitable Hilbert polynomial) is $\geq m$.
Note that a rational function $g/f$ always has 
$m$ first order annihilators of the form
$f (\p_i g) - g (\p_i f) - gf \p_i \in D$ 
but they are in general of high polynomial degree.
See also \cite{z}.}.
In a Weyl algebra, every left ideal is finitely generated\footnote{%
Somewhat surprisingly, by a theorem of Stafford,
every left ideal in a Weyl algebra over $\R$ or $\C$
can in fact be generated by just two elements.}.
Algorithms are available to determine the annihilator of a rational function,
e.g.~in Macaulay2 \cite{m2,jl},
and while in principle such tools can be applied to tree amplitudes,
this approach does not seem to be practically viable for general $n$.
\step
It would be interesting to understand if there is a
more conceptual explanation for these annihilators
perhaps building on ideas in \cite{lmp};
to inquire if there are corresponding identities for
the amplitudes in specific dimensions such as $d=4$;
and to try to apply these annihilators to derive properties of the amplitudes.

\newcommand{\hd}[1]{{\bf #1}}
\section{Preliminaries}

\hd{Notation.} All vector spaces and algebras are over the complex numbers.
For every finite-dimensional vector space $X$ we denote:
\begin{center}
\begin{tabular}{r@{\qquad}l}
$X^\ast$ & the dual vector space\\
$R_X$ & the commutative algebra of polynomials $X \to \C$\\
$D_X$ & the Weyl algebra on $X$\\
$\Frac(R_X)$ & the field of rational functions on $X$
\end{tabular}
\end{center}
These things are defined independently of coordinates.
This is important because the spaces we encounter do not have a canonical coordinate system,
and we work with a linearly dependent set of coordinates.
Canonically,
\begin{center}
\begin{tabular}{r@{\qquad}l}
$X^\ast \subset R_X$ & the linear functionals, we refer to them as coordinates\\
$X \subset D_X$ & first order constant coefficient differential operators
\end{tabular}
\end{center}
As vector spaces,
$R_X \simeq S X^\ast$ and $D_X \simeq S X^\ast \otimes S X$ where
$S$ is the symmetric tensor algebra. 
Here $SX$ are the constant coefficient differential operators.
\step
\hd{Coordinate dependent definitions.}
Even though we do not commit to a particular basis,
we recall the coordinate dependent definitions.
By a coordinate we mean an element of $X^\ast$.
A coordinate system is a basis $x_1,\ldots,x_m \in X^\ast$.
The polynomial ring is then
\[
R_X = \C[x_1,\ldots,x_m]
\]
Let $\p_1,\ldots,\p_m \in X$ be the dual basis.
Then $D_X$ is the associative algebra with identity generated by 
the variables $x_1,\ldots, x_m, \p_1,\ldots, \p_m$ modulo the two-sided ideal generated by
\[
 x_ix_j-x_jx_i \qquad \p_i\p_j - \p_j\p_i\qquad \p_ix_j-x_j\p_i - \delta_{ij}
\]
\step
\hd{Linear maps.}
If $X,Y$ are vector spaces then
every linear map $\down: X \to Y$ canonically induces several other linear maps:
\begin{center}
\begin{tabular}{r@{\qquad}l}
$SX \to SY$ & the push-forward, also denoted $\down$\\
$Y^\ast \to X^\ast$ & the adjoint map, denoted $\down^\ast$\\
$R_Y \to R_X$ & the pull-back, also denoted $\down^\ast$\\
$\Frac(R_Y) \to \Frac(R_X)$ & the pull-back, also denoted $\down^\ast$
\end{tabular}
\end{center}
It neither induces a map $D_X \to D_Y$ nor $D_Y \to D_X$.
We often find it convenient to specify a linear map by specifying 
the adjoint $Y^\ast \to X^\ast$.
\step
\hd{Direct sums.}
For a direct sum of finite-dimensional vector spaces,
there are canonical isomorphisms
$R_{X \oplus Y} \simeq R_X \otimes R_Y$
and $D_{X \oplus Y} \simeq D_X \otimes D_Y$
that we frequently use.
This uses
the tensor product of algebras,
where all elements of $D_X$ commute with all elements of $D_Y$.
Canonical inclusions such as
$D_X \hookrightarrow D_X \otimes D_Y, \delta \mapsto \delta \otimes 1$ are sometimes used.
\step
\hd{Index sets.}
Instead of a standard index set such as $\{1,\ldots,n\}$
we work with finite index sets denoted $I$, hence $n$ is replaced by $|I|$.
For YM, a cyclic ordering of the elements of $I$ is required.
In many calculations, $I$ is a disjoint union
\begin{subequations}\label{eq:ijk}
\begin{equation}\label{eq:ijk0}
I = J \sqcup K
\end{equation}
with $|J|,|K| \geq 2$
and often \eqref{eq:ijk0} is assumed.
Sometimes we need index sets
with a distinguished element, always denoted $\bullet$.
We abbreviate
\begin{equation}
\JB = J \sqcup \{\bullet\}
\qquad \KB = K \sqcup \{\bullet\}
\end{equation}
\end{subequations}
If $\JB$, $\KB$ have a cyclic order
then $J \sqcup K$ acquires a cyclic order\footnote{%
If the elements of $\JB$ are ${\bullet},a_1,\ldots,a_{|J|}$ and
the elements of $\KB$ are ${\bullet},b_1,\ldots,b_{|K|}$ listed in cyclic order,
starting with ${\bullet}$ for convenience,
then on $J \sqcup K$ use the cyclic order $a_1,\ldots,a_{|J|},b_1,\ldots,b_{|K|}$.}.
Conversely, if $I$ has a cyclic order and \eqref{eq:ijk0} respects this,
then $\JB$, $\KB$ inherit a cyclic order.
\section{Kinematic variables} \label{sec:kinematic}
Here we define in detail the vector space on which the amplitudes live;
the differential operators $D$ used in Theorem \ref{theorem:main};
and notation that allows one to state the schematic factorization 
of residues in \eqref{eq:rmx} in a precise way in Definition \ref{def:rec}.
\begin{definition}\label{def:cvecs}
For every finite set $I$ with $|I| \geq 3$,
consider first the complex `ambient' vector space of dimension $3|I|^2$ with
coordinate system
\begin{equation}\label{eq:kce}
k_{ij}, c_{ij}, e_{ij} \qquad \text{where} \qquad i,j \in I
\end{equation}
Denote by $C(I)$ the linear subspace defined by the relations \eqref{eq:linrel}.
In this paper, the elements \eqref{eq:kce} are understood to be elements of the
dual space $C(I)^\ast$.
\end{definition}
The dimension of $C(I)$ is
\[
\underbrace{\tfrac12 n(n-3)}_{k} + \underbrace{n(n-2)}_{c} + \underbrace{\tfrac12 n(n-1)}_{e}
= 2n(n-2)
\]
where $n = |I|$.
If $|I|=3$ then the $k_{ij}$ vanish identically as elements of $C(I)^\ast$.
Note that the relations \eqref{eq:linrel} do not refer to an ordering,
hence there is a natural action of the group of permutations of $I$ on the vector space $C(I)$.
\step
The \eqref{eq:kce} are a linearly dependent set of coordinates,
not a coordinate system on $C(I)$. Therefore we cannot define partial derivatives in the usual way.
We work with the following linearly dependent
set of constant coefficient differential operators.
\begin{lemma} \label{lemma:dops}
There are unique
\begin{equation}\label{eq:dkce}
D_{k_{ij}}, D_{c_{ij}}, D_{e_{ij}} \in C(I)
\end{equation}
that as elements of the Weyl algebra $D_{C(I)}$ satisfy
\begin{align*}
[D_{k_{ij}}, k_{ab}] & = (1-\delta_{ij})(1-\delta_{ab})(
\delta_{ia}\delta_{jb} + \delta_{ib}\delta_{ja}
- \tfrac{1}{|I|-2} (\delta_{ia} + \delta_{ib} + \delta_{ja} + \delta_{jb})
+ \tfrac{2}{(|I|-1)(|I|-2)}
)\\
[D_{c_{ij}}, c_{ab}] & = (1-\delta_{ij})(1-\delta_{ab})(\delta_{ia}\delta_{jb}- \tfrac{1}{|I|-1} \delta_{jb})\\
[D_{e_{ij}}, e_{ab}] & = (1-\delta_{ij})(1-\delta_{ab})(\delta_{ia}\delta_{jb} + \delta_{ib}\delta_{ja})
\end{align*}
and such that all `mixed' commutators are zero:
\[
\begin{aligned}
[D_{k_{ij}},c_{ab}] & = 0\quad &
[D_{c_{ij}},k_{ab}] & = 0\quad &
[D_{e_{ij}},k_{ab}] & = 0 \\
[D_{k_{ij}},e_{ab}] & = 0 &
[D_{c_{ij}},e_{ab}] & = 0 &
[D_{e_{ij}},c_{ab}] & = 0
\end{aligned}
\]
They span $C(I)$.
They satisfy
\begin{equation}\label{eq:dlinrel}
\begin{aligned}
D_{k_{ii}} & = 0\quad & D_{k_{ij}} & = D_{k_{ji}}\quad & \textstyle\sum_i D_{k_{ij}} & = 0\\
D_{c_{ii}} & = 0 &              &                & \textstyle\sum_i D_{c_{ij}} & = 0\\
D_{e_{ii}} & = 0 & D_{e_{ij}} & = D_{e_{ji}} &              &
\end{aligned}
\end{equation}
\end{lemma}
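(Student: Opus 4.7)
The plan is to exploit the obvious decomposition of the ambient vector space of Definition~\ref{def:cvecs} into three \emph{sectors} corresponding to the three families of coordinates. Since the linear relations~\eqref{eq:linrel} only involve coordinates of the same type, they decompose $C(I)$ as an internal direct sum $C(I) = C(I)_k \oplus C(I)_c \oplus C(I)_e$, with the dual decomposition of $C(I)^*$ compatible with the three families in~\eqref{eq:kce}. The vanishing of all mixed commutators then forces $D_{k_{ij}} \in C(I)_k$, $D_{c_{ij}} \in C(I)_c$, $D_{e_{ij}} \in C(I)_e$, splitting the lemma into three independent problems of the same shape. I describe the $k$-sector in detail; the other two are entirely analogous and shorter.

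For $D \in C(I)_k$ viewed as a constant coefficient first order operator on $C(I)$, the commutator $[D,k_{ab}]$ is simply the scalar pairing $k_{ab}(D)$ between $D \in C(I)_k$ and $k_{ab} \in C(I)_k^*$. Because the family $\{k_{ab}\}$ spans $C(I)_k^*$ by construction, prescribing $[D_{k_{ij}},k_{ab}]$ for all $a,b$ determines $D_{k_{ij}} \in C(I)_k$ uniquely, provided the prescribed values descend from the ambient $V_k^*$ to the quotient $C(I)_k^* = V_k^*/W_k$, where $W_k \subset V_k^*$ is the span of the defining $k$-relations $k_{aa}$, $k_{ab}-k_{ba}$ and $\sum_a k_{ab}$. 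The first two relations are handled for free by the factor $(1-\delta_{ab})$ and the manifest $a \leftrightarrow b$ symmetry of the formula. The row-sum relation $\sum_a [D_{k_{ij}},k_{ab}] = 0$ is a short arithmetic identity in which the coefficients $\tfrac{1}{|I|-2}$ and $\tfrac{2}{(|I|-1)(|I|-2)}$ are precisely what is required for cancellation; this is the one genuinely technical step of the proof. The $c$-sector requires an analogous but simpler sum check, and the $e$-sector has no sum relation at all, so both are immediate.

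The relations in~\eqref{eq:dlinrel} then fall out of the same formulas. The factor $(1-\delta_{ij})$ kills every pairing involving $D_{k_{ii}}$, forcing $D_{k_{ii}} = 0$; the $(i,j) \leftrightarrow (a,b)$ symmetry of the right-hand side yields $D_{k_{ij}} = D_{k_{ji}}$; and summing the same identity as in the descent check, but over $i$ instead of $a$, gives $\sum_i D_{k_{ij}} = 0$. The $c$- and $e$-sector relations are read off in exactly the same way.

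Finally, for the spanning assertion, it suffices by dimension count to show that the linear map $C(I)_k^* \to C(I)_k$, $k_{ij} \mapsto D_{k_{ij}}$, is injective, equivalently that the symmetric bilinear form $(f,g) \mapsto [D_f,g]$ induced on $C(I)_k^*$ is non-degenerate. The right-hand side of the commutator formula is manifestly (a nonzero scalar multiple of) the matrix, in the standard basis of $V_k^*$, of the orthogonal projection from $V_k^*$ onto a subspace that maps isomorphically onto $C(I)_k^*$ under the quotient; hence the induced form on $C(I)_k^*$ is a nonzero multiple of the identity, giving the required non-degeneracy. The analogous computations in the $c$- and $e$-sectors complete the proof that the $D$-operators span $C(I) = C(I)_k \oplus C(I)_c \oplus C(I)_e$.
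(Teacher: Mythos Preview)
Your proof is correct and follows essentially the same approach as the paper: check that the prescribed commutators annihilate the defining relations~\eqref{eq:linrel}, so that they descend from the ambient space to determine unique elements of $C(I)$. You supply considerably more detail than the paper's one-line argument --- in particular the sector decomposition, the explicit row-sum check, and the spanning claim via non-degeneracy of the induced bilinear form --- all of which the paper subsumes under ``the rest of the lemma goes the same way.''
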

\begin{proof}
The given commutators for $D_{k_{ij}}$
at first only determine an operator on the ambient vector space.
But since $[D_{k_{ij}},-]$ annihilates all
left hand sides of the relations \eqref{eq:linrel},
we obtain a unique constant coefficient differential
operator $\in C(I)$ as claimed.
The rest of the lemma goes the same way.
\qed\end{proof}
We have introduced coordinates $k_{ii}$, $c_{ii}$, $e_{ii}$
and derivatives $D_{k_{ii}}$, $D_{c_{ii}}$, $D_{e_{ii}}$
that are identically zero and in some sense superfluous.
These phantom objects are nevertheless useful because they allow us
to write sums as in \eqref{eqs:abc}.


\newcommand{\kp}{k^\perp}
\newcommand{\cp}{c^\perp}
\newcommand{\ep}{e^\perp}
\newcommand{\Dp}{D^\perp}
\newcommand{\Mp}{M^\perp}
\newcommand{\Cp}{C(I)^\perp}

\step
Convention \eqref{eq:ijk} is in force, $I = J \sqcup K$.
To this decomposition of $I$ we associate a coordinate $\xi$,
and a vector $\Xi$ transversal to the hyperplane $\xi = 0$.
The dependence of these definitions on the decomposition $I = J \sqcup K$
is left implicit in the notation.
\begin{figure}[h!]
\begin{center}
  \begin{tikzpicture}[scale=1.2]
  \draw [black] (0,0)
  to (5,0.5)
  to (5.5,1.5) node [anchor=west,yshift=-10] {$\mathrlap{\Cp = \{\xi=0\}}$}
  to (0.5,1)
  to cycle;
  \draw [black,thick,->] (2.55,0.80) -- (2.35,2.3) node [anchor=west,xshift=3] {$\Xi$};
  \draw [black,thick,->] (2.55,0.80) -- (3.80,0.91) node [anchor=west] {$\Dp$};
\end{tikzpicture}
\end{center}
\caption{%
Decomposition of $C(I)$ associated to the decomposition $I = J \sqcup K$.
The coordinates $\kp$, $\cp$, $\ep$ are uniquely determined by
being constant in the direction $\Xi$
and coinciding with $k$, $c$, $e$ respectively along $\Cp$.
The derivatives $\Dp$ span $\Cp$.
}\label{fig:mp}
\end{figure}
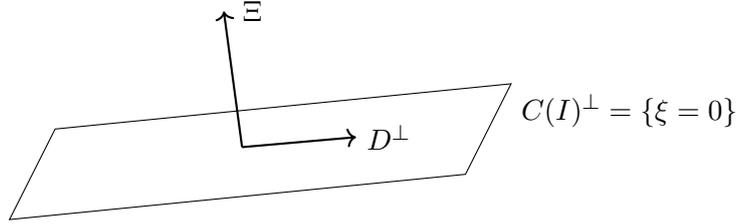
\begin{definition}\label{def:xixi}
Define $\xi \in C(I)^\ast$ and $\Xi \in C(I)$ by\footnote{%
One can equivalently define $\Xi$ by
$\Xi = \text{const}\cdot \sum_{j,j' \in J} D_{k_{jj'}}$
normalized so that $[\Xi,\xi]=2$.}
\begin{align}
\label{eq:defxi}
\xi & = \textstyle\sum_{j,j' \in J} k_{jj'}
    = \textstyle\sum_{k,k' \in K} k_{kk'}\\
\Xi & = \textstyle\frac{1}{|J|(|J|-1)} \sum_{j,j' \in J} D_{k_{jj'}}
 \textstyle- \frac{2}{|J||K|} \sum_{j \in J, k \in K} D_{k_{jk}}
+ \frac{1}{|K|(|K|-1)} \sum_{k,k' \in K} D_{k_{kk'}}
\end{align}
Also define
\begin{align*}
\kp_{jj'} & = k_{jj'} - \textstyle\frac{1-\delta_{jj'}}{|J|(|J|-1)} \xi &
\Dp_{k_{jj'}} & = D_{k_{jj'}} - \textstyle\frac{|K|(|K|-1)(1-\delta_{jj'})}{(|I|-1)(|I|-2)} \Xi \\
\kp_{jk} = \kp_{kj} & = k_{jk} + \textstyle\frac{1}{|J||K|} \xi &
\rule{10pt}{0pt}
\Dp_{k_{jk}} = \Dp_{k_{kj}} & = D_{k_{jk}} + \textstyle\frac{(|K|-1)(|J|-1)}{(|I|-1)(|I|-2)} \Xi  \displaybreak[0] \\
\kp_{kk'} & = k_{kk'} - \textstyle\frac{1-\delta_{kk'}}{|K|(|K|-1)} \xi &
\Dp_{k_{kk'}} & = D_{k_{kk'}} - \textstyle\frac{|J|(|J|-1)(1-\delta_{kk'})}{(|I|-1)(|I|-2)} \Xi  \displaybreak[0] \\
\cp_{ii'} & = c_{ii'} &
\Dp_{c_{ii'}} & = D_{c_{ii'}} \\
\ep_{ii'} & = e_{ii'} &
\Dp_{e_{ii'}} & = D_{e_{ii'}}
\end{align*}
for all $j,j' \in J$ and $k,k' \in K$ and $i,i' \in I = J \sqcup K$.
Set $\Cp = \{\xi = 0\}$.
\end{definition}
\begin{lemma}\label{lemma:dpx} We have
\[
   C(I) = \C \Xi \oplus \Cp
\]
The $\Dp$ span $\Cp$.
 In the Weyl algebra $D_{C(I)}$, 
\begin{align*}
[\Xi,\xi] & = 2 &
[\Xi,k_{jj'}] & = \textstyle\frac{2(1-\delta_{jj'})}{|J|(|J|-1)}\\
[\Xi,\kp] &= [\Xi,\cp] = [\Xi,\ep] = 0 &
[\Xi,k_{jk}] & = \textstyle-\frac{2}{|J||K|}\\
[\Dp, \xi] &= 0 &
[\Xi,k_{kk'}] & = \textstyle\frac{2(1-\delta_{kk'})}{|K|(|K|-1)}
\end{align*}
for all $j,j' \in J$ and $k,k' \in K$.
The $\kp, \cp, \ep$ and $\Dp$ satisfy
the linear relations \eqref{eq:linrel} 
and \eqref{eq:dlinrel}, with $k$ replaced by $\kp$ and so forth.
They additionally satisfy
\begin{equation}\label{eq:kperp}
\textstyle\sum_{j,j' \in J} \kp_{jj'} = 0
\qquad
\textstyle\sum_{j,j' \in J} \Dp_{k_{jj'}} = 0
\end{equation}
Same if the summation is over $k,k' \in K$ instead.
\end{lemma}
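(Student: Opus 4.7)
The plan is to verify the commutator identities as the central step, and then deduce the direct sum decomposition, the spanning property, and the remaining linear relations from them. The only non-trivial input required is the commutator table in Lemma \ref{lemma:dops}; everything else is a direct calculation, carefully separating the cases ``both indices in $J$'', ``one in $J$ and one in $K$'', and ``both in $K$''.

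First I would compute $[\Xi,k_{ab}]$ for every pair $a,b$, by expanding $\Xi$ as written in Definition \ref{def:xixi} and using the formula for $[D_{k_{ij}},k_{ab}]$. The index sums are bookkeeping: for each summand in $\Xi$, the commutator produces a constant that depends only on whether $a,b$ lie in $J$ or $K$, with contributions from all four terms ($\delta_{ia}\delta_{jb}$, $\delta_{ib}\delta_{ja}$, the single deltas, and the constant $2/((|I|-1)(|I|-2))$). Collecting these yields precisely the three values $\tfrac{2(1-\delta_{jj'})}{|J|(|J|-1)}$, $-\tfrac{2}{|J||K|}$, $\tfrac{2(1-\delta_{kk'})}{|K|(|K|-1)}$ listed in the lemma. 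Summing $[\Xi,k_{jj'}]$ over $j,j'\in J$ immediately gives $[\Xi,\xi]=2$. Since the defining expression for $\Xi$ involves only $D_{k}$-type operators, the mixed-commutator rules in Lemma \ref{lemma:dops} force $[\Xi,c_{ab}]=[\Xi,e_{ab}]=0$, which in particular gives $[\Xi,\cp]=[\Xi,\ep]=0$. The remaining $[\Xi,\kp]=0$ is then the cancellation designed into the definition of $\kp$: e.g.\ for $j,j'\in J$, $[\Xi,\kp_{jj'}]=[\Xi,k_{jj'}]-\tfrac{1-\delta_{jj'}}{|J|(|J|-1)}[\Xi,\xi]$, which vanishes by the two computations above; the other two cases are the same.

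Next I would dispose of $[\Dp,\xi]=0$ in the same style. For the $c$- and $e$-type operators there is nothing to prove, since $\Dp_{c_{ii'}}=D_{c_{ii'}}$ commutes with all $k_{ab}$ by Lemma \ref{lemma:dops}. For the $k$-type it suffices to show $[D_{k_{ab}},\xi]=\tfrac{|K|(|K|-1)(1-\delta_{ab})}{(|I|-1)(|I|-2)}\cdot 2$ when $a,b\in J$, and the analogous statements in the other two cases; these come from the same commutator table already used, summed over $j,j'\in J$. Combined with $[\Xi,\xi]=2$ and the definitions, every $\Dp$ then satisfies $[\Dp,\xi]=0$.

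With the commutators in hand, $C(I)=\C\Xi\oplus\Cp$ is immediate: $\Xi$ is nonzero since $[\Xi,\xi]=2\neq 0$, so $\Xi\notin \Cp$, and $\Cp$ has codimension one. That $\Dp$ spans $\Cp$ follows by a dimension count: the $D_{k_{ab}},D_{c_{ab}},D_{e_{ab}}$ span $C(I)$ (Lemma \ref{lemma:dops}), and by construction each $D$ equals a $\Dp$ plus a scalar multiple of $\Xi$, so $\mathrm{span}(\Dp)+\C\Xi=C(I)$; since $\mathrm{span}(\Dp)\subseteq \Cp$ and $\Xi\notin\Cp$, one must have $\mathrm{span}(\Dp)=\Cp$. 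Finally, the linear relations transport to the perp versions by direct substitution: e.g.\ for $\sum_{i\in I}\kp_{ij}$ with $j\in J$, split the sum into $i\in J$ and $i\in K$; the shift contributions are $-(|J|-1)\tfrac{\xi}{|J|(|J|-1)}+|K|\tfrac{\xi}{|J||K|}=0$, while $\sum_i k_{ij}=0$ by \eqref{eq:linrel}. The additional identities $\sum_{j,j'\in J}\kp_{jj'}=0$ and $\sum_{j,j'\in J}\Dp_{k_{jj'}}=0$ are the design conditions: one just checks that the total shift equals $\xi$ respectively $\Xi$.

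The main obstacle will be the combinatorial bookkeeping in the very first step, where the four terms of $[D_{k_{ij}},k_{ab}]$ must be summed carefully over the three pieces of $\Xi$ to extract the clean answers above; once this is in hand the rest is mechanical.
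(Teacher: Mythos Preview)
Your proposal is correct and matches the paper's approach: the paper's entire proof is the sentence ``By direct calculation,'' and what you have written is a careful unpacking of precisely that calculation using the commutator table in Lemma~\ref{lemma:dops} together with Definition~\ref{def:xixi}.
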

\begin{proof}
By direct calculation.
\qed\end{proof}
Let us pause to discuss what is needed
to state the factorization \eqref{eq:rmx} precisely.
Essentially we need maps between
the $\xi=0$ hyperplane in $C(I)$ which is the locus
of the pole under consideration on the one hand,
and the spaces $C(\JB)$ and $C(\KB)$
where the two amplitudes live on the other hand.
Here are more details:
\begin{itemize}
\item Since $C(I)$ lacks variables corresponding
to the polarization of the $\bullet$ particle,
we use an extension $M = C(I) \oplus E$ where $E$
provides the missing variables for $\bullet$.
\item We define maps not only along $\xi = 0$
(now understood to be a hyperplane in $M$) but on all of $M$.
This leads to two linear, surjective maps
\begin{equation}\label{mdiag}
\begin{tikzcd}[column sep = huge]
& M \arrow[dl, swap,"\down_J"]
    \arrow[dr, "\down_K"] & \\
C(\JB)
& &
C(\KB)
\end{tikzcd}
\end{equation}
The extension to $\xi \neq 0$ is not needed to state the recursion
in Definition \ref{def:rec}, but it is exploited in the statement and proof
of the technical Lemma \ref{lemma:key}.
\item Along $\xi = 0$ the maps $\down_J$ and $\down_K$
are defined in the canonical way,
by identifying the momenta $k_i$ and polarizations $\eps_i$
with the corresponding ones in $\JB$ when $i\in J$
respectively $\KB$ when $i \in K$,
and by translating this to the variables \eqref{eq:kce},
cf.~Remark \ref{remark:interpretation}.
Together with the linear constraints, this defines the maps completely along $\xi = 0$.
The extension to $\xi \neq 0$
is such that $\down_J$ and $\down_K$ are constant along $\Xi$.
\item Finally, a differential operator $U$ is introduced on $M$ whose
purpose is to contract the polarization of the particle $\bullet$ in $\JB$
with the corresponding particle $\bullet$ in $\KB$.
\end{itemize}
In the remainder of this section, we spell these things out.
\newcommand{\ph}{\widehat{\p}}
\begin{definition}[Extension space]
Consider first the complex `ambient' vector space of dimension $2|J|+2|K| = 2|I|$
with coordinate system\footnote{%
In this instance, the bullet $\bullet$ is nothing but suggestive decoration.}
\begin{equation}\label{eq:evars}
c_{j\bullet},\, e_{j\bullet},\, c_{k\bullet},\, e_{k\bullet}
\qquad \text{where}
\qquad j \in J, k\in K
\end{equation}
Let $E$ be the subspace defined by
\[
\textstyle\sum_{j \in J} c_{j\bullet} = 0
\qquad
\textstyle\sum_{k \in K} c_{k\bullet} = 0
\]
a linear subspace of dimension $2|J| + 2|K|-2 = 2|I|-2$.
The elements \eqref{eq:evars} are understood to be elements of the dual space $E^\ast$.
There are unique
\[
D_{c_{j\bullet}},\, D_{e_{j\bullet}},\, D_{c_{k\bullet}},\, D_{e_{k\bullet}}\; \in E
\]
that as elements of the Weyl algebra $D_E$ satisfy
\begin{align*}
[D_{c_{j\bullet}}, c_{j'\bullet}] & = \delta_{jj'} - \tfrac{1}{|J|} &
[D_{c_{k\bullet}}, c_{k'\bullet}] & = \delta_{kk'} - \tfrac{1}{|K|}\\
[D_{e_{j\bullet}}, e_{j'\bullet}] & = \delta_{jj'} &
[D_{e_{k\bullet}}, e_{k'\bullet}] & = \delta_{kk'}
\end{align*}
for $j,j' \in J$ and $k,k' \in K$ and all other `mixed' commutators are zero\footnote{%
That is, commutators mixing $c$ and $e$ or mixing $J$ indices with $K$ indices are zero.}.
\end{definition}
\begin{definition}[Master space]
The master space corresponding to $I = J \sqcup K$ is
\[
   M = C(I) \oplus E
\]
with dimension $2n^2-2n-2$,
where $n = |I|$. Its Weyl algebra is therefore
\[
D_M = D_{C(I)} \otimes D_E
\]
We agree that elements of $D_{C(I)}$ and $D_E$
are extended to $D_M$. For example,
$\Xi$ is also viewed as an element of $D_M$ from now on.
Set $\Mp = \Cp \oplus E$,
so
\begin{equation}\label{eq:mdec}
M = \C \Xi \oplus \Mp
\end{equation}
\end{definition}
\begin{lemma} \label{lemma:pb}
There is a unique linear map $\down_J$ whose adjoint
$\down_J^\ast: C(\JB)^\ast \to M^\ast$
maps\footnote{%
On the right hand side, the variables $\kp$, $\cp$, $\ep$ are in $C(I)^\ast$
whereas the variables $c_{j\bullet}$, $e_{j\bullet}$ are in $E^\ast$.
}
\begin{align*}
k_{jj'} & \mapsto \kp_{jj'} &
c_{jj'} & \mapsto \cp_{jj'} &
e_{jj'} & \mapsto \ep_{jj'} \\
k_{j\bullet}, k_{\bullet j} & \mapsto - \textstyle\sum_{j' \in J} \kp_{j'j} &
c_{\bullet j} & \mapsto - \textstyle\sum_{j' \in J} \cp_{j'j}\\
& & c_{j\bullet} & \mapsto c_{j\bullet} &
e_{j\bullet}, e_{\bullet j} & \mapsto e_{j\bullet}
\end{align*}
for all $j,j' \in J$. Analogous for $\down_K$.
The maps $\down_J$, $\down_K$ are surjective.
\end{lemma}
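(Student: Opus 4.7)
The plan has three stages. First, I extend the prescription to the ambient vector space of dimension $3|\JB|^2$ spanned by the formal symbols $k_{ab}, c_{ab}, e_{ab}$ with $a,b \in \JB$, using the formulas of the lemma for the off-diagonal symbols and sending the on-diagonal symbols $k_{aa}, c_{aa}, e_{aa}$ to zero. Second, I verify that this preliminary map annihilates every relation in \eqref{eq:linrel}, so that it descends to a well-defined linear map $\down_J^\ast : C(\JB)^\ast \to M^\ast$; uniqueness is then automatic, since the listed symbols span $C(\JB)^\ast$. Third, I deduce surjectivity of $\down_J$ from injectivity of $\down_J^\ast$.

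The bulk of the work is the descent check, and I expect momentum conservation to be the only subtle piece. The symmetries $k_{ab} = k_{ba}$ and $e_{ab} = e_{ba}$ are immediate: Definition \ref{def:xixi} makes $\kp_{jj'}$ and $\ep_{jj'}$ symmetric in $j,j'$, and the prescriptions for the pairs $(k_{j\bullet}, k_{\bullet j})$ and $(e_{j\bullet}, e_{\bullet j})$ coincide by construction. The on-diagonal vanishings reduce to $\kp_{jj} = k_{jj}$, $\cp_{jj} = c_{jj}$, $\ep_{jj} = e_{jj}$, all zero in $C(I)^\ast$. For $\sum_a k_{ab} = 0$ with $b = j \in J$, the $J$-side contributions give $\sum_{j'} \kp_{j'j}$ and the term $k_{\bullet j}$ contributes exactly $-\sum_{j'} \kp_{j'j}$, yielding telescopic cancellation. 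For $b = \bullet$ the image is $-\sum_{j,j'\in J} \kp_{j'j}$, which vanishes precisely by \eqref{eq:kperp}; this is the only place where the $\xi$-correction built into $\kp$ is genuinely needed. The $c$-relation $\sum_a c_{a\bullet} = 0$ corresponds exactly to the defining relation $\sum_{j\in J} c_{j\bullet} = 0$ of $E$.

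Surjectivity is the final step. From $[\Xi,\kp] = [\Xi,\cp] = [\Xi,\ep] = 0$ in Lemma \ref{lemma:dpx} together with \eqref{eq:mdec}, the image of $\down_J^\ast$ lies in $\Cp^\ast \oplus E^\ast \subset M^\ast$. The $J$-to-$J$ coordinates contribute in the $\Cp^\ast$ factor while $c_{j\bullet}, e_{j\bullet}$ contribute in the $E^\ast$ factor, and the two sides do not interfere, so injectivity reduces to independence within each factor. Inside $\Cp^\ast$ the images $\kp_{jj'}, \cp_{jj'}, \ep_{jj'}$ with $j,j'\in J$ satisfy only the $J$-analogues of \eqref{eq:linrel}, which is precisely what the relations on $C(\JB)^\ast$ already impose; inside $E^\ast$ the $c_{j\bullet}, e_{j\bullet}$ also satisfy only their defining relations. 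A dimension count then closes the argument. The corresponding statement for $\down_K$ follows by swapping $J$ and $K$ throughout.
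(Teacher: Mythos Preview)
Your well-definedness argument is essentially the paper's: check that the prescription respects all the relations \eqref{eq:linrel} on $C(\JB)^\ast$, and note (as the paper does) that $\sum_{j\in J} k_{j\bullet} = 0$ maps to zero precisely because of \eqref{eq:kperp}, which is why $\kp$ rather than $k$ must appear on the target side.

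For surjectivity you diverge from the paper. The paper proves it by constructing an explicit right-inverse $\up_J$ in Appendix \ref{app:kref}; this has the side benefit that $\up_J$ is needed later for the decomposition \eqref{eq:decomp}. You instead argue injectivity of $\down_J^\ast$, which is a legitimate alternative, but your execution has a gap. Your assertion that the images $\kp_{jj'}, \cp_{jj'}, \ep_{jj'}$ with $j,j'\in J$ ``satisfy only the $J$-analogues of \eqref{eq:linrel}'' is false: by \eqref{eq:kperp} one has the additional relation $\sum_{j,j'\in J}\kp_{jj'}=0$, which is \emph{not} of that form (and conversely the would-be $J$-analogue $\sum_{j'\in J}\kp_{j'j}=0$ does \emph{not} hold, nor does $\sum_{j'\in J}\cp_{j'j}=0$). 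This extra relation is not fatal to injectivity --- it is exactly the image of the $C(\JB)^\ast$ relation $\sum_{j\in J} k_{j\bullet}=0$ --- but your argument as written does not account for it.

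What would close the argument is the dimension count you allude to but do not perform: the span of the images has dimension
\[
\underbrace{\tbinom{|J|}{2}-1}_{\kp}\;+\;\underbrace{|J|(|J|-1)}_{\cp}\;+\;\underbrace{\tbinom{|J|}{2}}_{\ep}\;+\;\underbrace{(|J|-1)}_{c_{j\bullet}}\;+\;\underbrace{|J|}_{e_{j\bullet}}\;=\;2(|J|+1)(|J|-1)\;=\;\dim C(\JB)^\ast,
\]
where the $-1$ in the $\kp$ count reflects precisely the extra relation from \eqref{eq:kperp}. You should verify that this is the \emph{only} extra relation among the $\kp_{jj'}$ (it is, since the $k_{jj'}$ for $j\neq j'$ in $J$ are independent in $C(I)^\ast$ modulo symmetry, and each $\kp_{jj'}$ differs from $k_{jj'}$ by the same multiple of $\xi$), and that there are none among the $\cp_{jj'}$ or $\ep_{jj'}$ restricted to $J$-indices.
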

\begin{proof}
We have specified $\down_J^\ast$ on a set that spans $C(\JB)^\ast$,
hence it exists and is unique if it respects all relations.
For instance, $\smash{\sum_{j' \in J} c_{j' j} + c_{\bullet j}}$ is zero in $C(\JB)^\ast$
and must be mapped to zero, which it is.
More interestingly,
$\smash{\sum_{j \in J} k_{\bullet j}}$ is zero in $C(\JB)^\ast$ and must be mapped to
zero, and it is by \eqref{eq:kperp}.
The last example shows that one must use $\kp$, not $k$, on $M^\ast$.
Surjectivity follows from the explicit right-inverses
constructed in Appendix \ref{app:kref}.
\qed\end{proof}
\begin{definition}[Contraction operator]\label{def:u}
Define $U \in D_M$ by
\[
U = 
\sum_{j \in J, k \in K}
\Big(
\kp_{jk} D_{c_{j\bullet}} D_{c_{k\bullet}}
+
\cp_{jk} D_{c_{j\bullet}} D_{e_{k\bullet}}
+
\cp_{kj} D_{e_{j\bullet}} D_{c_{k\bullet}}
+
\ep_{jk} D_{e_{j\bullet}} D_{e_{k\bullet}}
\Big)
\]
\end{definition}
\section{Recursion for the amplitudes}\label{sec:rec}
Here we define tree scattering amplitudes for YM, denoted $A_I$,
and for GR, denoted $M_I$. 
The 3-point scattering amplitudes are polynomial.
\begin{definition}[Base case] \label{def:basecase}
If $|I| = 3$ define $A_I, M_I \in R_{C(I)}$ by
\begin{align*}
A_I & = c_{12}e_{31} + c_{23} e_{12} + c_{31} e_{23}\\
M_I & = (A_I)^2
\end{align*}
where $1,2,3$ are shorthands for the three elements of $I$.
\end{definition}
%
For higher amplitudes we state a recursion.
That this recursion admits a solution at all
is left as an assumption, Assumption \ref{asspa},
but a rough sketch of a proof is in Remark \ref{remark:existence}.
The purpose of this section is to state the recursion,
and to show that if a solution exists, then it is unique.
(See \cite{nr} for a self-contained treatment of amplitudes in $d=4$.)
\step
Roughly, the recursion says
that the amplitudes are rational functions
with only simple poles.
The poles are along a known family of hyperplanes,
and the residues are given recursively in terms of products of lower amplitudes.
This is known as factorization of residues.
The case $|I|=3$ serves as the base case.
\step
Define $X_i$, $Y_i$, $Z$ as in \eqref{eq:xyz}
using the differential operators in Lemma \ref{lemma:dops}.
\begin{definition}[Recursion] \label{def:rec}
By YM respectively GR tree amplitudes we mean a sequence
of rational functions, one for every integer $n \geq 3$.
Using an index set $I$ with $n = |I|$,
these rational functions are denoted, respectively,
\[
A_I, M_I \in \Frac(R_{C(I)})
\]
It is understood that $I$ must have a cyclic order for the YM amplitudes $A_I$
but no order for GR amplitudes $M_I$.
It is understood that $I$ is merely an index,
so if $I \simeq I'$ is a bijection,
preserving the cyclic order in the case of YM,
then $A_I \simeq A_{I'}$ respectively $M_I \simeq M_{I'}$
by relabeling coordinate indices. With these preliminaries, the following
properties must hold:
\begin{itemize}
\item \emph{Base case.} For $n=3$ the amplitudes are given by Definition \ref{def:basecase}.
\item \emph{Polynomiality in $c$ and $e$.}
The amplitudes $A_I$, $M_I$ are polynomial in the $c$- and $e$-variables.
So they are polynomials in the $c$- and $e$-variables with coefficients that are 
rational functions of the $k$-variables.
\item \emph{Permutation symmetry.} $A_I$ is invariant under cyclic
permutations of $I$, whereas $M_I$ is invariant under all permutations of $I$.
\item \emph{Gauge invariance.} For every $i \in I$, both $A_I$ and $M_I$ are annihilated by
$X_i$.
\item \emph{Polarization homogeneity.} For every $i \in I$
define a grading on $C(I)^\ast$ by
\[
|k_{ab}| = 0
\qquad
|c_{ab}| = \delta_{bi}
\qquad
|e_{ab}| = \delta_{ai} + \delta_{bi}
\]
for all $a \neq b$.
Then $A_I$ has degree $1$ and $M_I$ has degree $2$.
That is, they are annihilated by 
$Y_i$ with $h=1$ and $h=2$ respectively.
\item \emph{Momentum homogeneity.} Define a grading on $C(I)^\ast$ by
\[
|k_{ab}| = 2 \qquad
|c_{ab}| = 1 \qquad
|e_{ab}| = 0
\]
for all $a \neq b$.
Then $A_I$ has degree $4-n$ and
$M_I$ has degree $2$.
That is, they are annihilated by $Z$
with $s = 4-n$ and $s = 2$ respectively.
\item \emph{Poles and residues.}
Both $A_I$ and $M_I$ have poles only along the subspaces
$\xi = 0$ corresponding to decompositions $I = J \sqcup K$
with $|J|,|K| \geq 2$.
In the case of $A_I$, there are poles only for decompositions that respect the cyclic order,
thus a cyclic order is induced on $\JB$ and $\KB$ respectively.
The poles are simple\footnote{%
So the amplitudes are in the principal fractional ideal
generated by $1/(\prod \xi)$ where the product runs over all decompositions $I = J \sqcup K$
and $\xi$ is defined by \eqref{eq:defxi}.
} with residue
\begin{subequations}\label{eq:resid}
\begin{alignat}{6}\label{eq:rec}
\Res_{\xi = 0} A_I
\,&=&\;\;
\pm \text{const}\cdot U &
\down_{J}^\ast(A_\JB)
\down_{K}^\ast(A_\KB)
\Big|_{\xi = 0}\\
\Res_{\xi = 0} M_I
\,&=&\;\;
\pm \text{const}\cdot U^2 &
\down_{J}^\ast(M_\JB)
\down_{K}^\ast(M_\KB)
\Big|_{\xi = 0}
\end{alignat}
\end{subequations}
as an identity in $\Frac(R_M/\xi R_M)$.
The operator $U$ is in Definition \ref{def:u}.
The specification of the residue is recursive because $|\JB|, |\KB| < |I|$.
\end{itemize}
\end{definition}
In \eqref{eq:resid},
the role of $U$ respectively $U^2$ is to contract
the polarizations of the $\bullet$ particles
that the amplitudes on the right hand side depend on.
\begin{remark}[Existence, rough sketch of proof] \label{remark:existence}
The YM amplitudes $A_I$ can be defined using color-ordered Feynman rules
in the Lorenz-Feynman gauge, see e.g.~\cite{dixon},
which yields rational functions of the products
$k_i \cdot k_j$, $k_i \cdot \eps_j$, $\eps_i \cdot \eps_j$ as required.
It is `well-known' that this has all the properties in Definition \ref{def:rec},
including gauge invariance and residue factorization.
The translation of residue factorization to the variables \eqref{eq:kce}
yields in particular the operator $U$ to carry out the
sum over polarizations of the $\bullet$ particles.
Given the $A_I$, the GR amplitudes $M_I$ can be defined using the field theoretical
Kawai-Lewellen-Tye or KLT relations,
see for instance \cite{sonder},
but then some properties in Definition \ref{def:rec} are not immediate,
for instance permutation symmetry and the factorization of residues.
These properties are discussed in some detail in the
`purely field theoretical view' of the KLT relations in \cite{sonder},
which exploits various known properties of the YM amplitudes and of the KLT kernels.
\end{remark}
\begin{lemma}[Uniqueness] \label{lemma:uniq}
If amplitudes as in Definition \ref{def:rec} exist, then they are unique.
\end{lemma}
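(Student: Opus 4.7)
The plan is induction on $n = |I|$. The base case $n = 3$ is trivial, since Definition \ref{def:basecase} pins down $A_I$ and $M_I$ completely. For the inductive step, assume that two hypothetical solutions of the recursion agree on every index set of cardinality $< n$, and denote by $\Delta_I \in \Frac(R_{C(I)})$ the difference of the two putative amplitudes at cardinality $n$; the goal is to show $\Delta_I = 0$.

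First I would observe that $\Delta_I$ inherits from Definition \ref{def:rec} every \emph{linear} property: polynomiality in the $c$- and $e$-variables, permutation (or cyclic) symmetry, annihilation by $X_i$, $Y_i$ and $Z$, and at most simple poles along the hyperplanes $\xi = 0$ associated to decompositions $I = J \sqcup K$. The crucial point is the residues: along each such hyperplane the residues of the two solutions are prescribed by \eqref{eq:resid} in terms of strictly smaller amplitudes, which by the inductive hypothesis coincide. Hence $\Delta_I$ has vanishing residue at every hyperplane where it could \emph{a priori} have a pole. Since simple poles along these hyperplanes are the only poles allowed and they have all disappeared, $\Delta_I$ is a regular rational function on $C(I)$, i.e.~a polynomial.

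It remains to show that a polynomial on $C(I)$ annihilated by all $X_i$, $Y_i$, $Z$ and carrying the stated permutation symmetry must vanish. For YM with $n \geq 5$ this is immediate: the $Z$-weight $4-n$ is strictly negative while $k$, $c$, $e$ carry nonnegative weights $2$, $1$, $0$, so no nonzero polynomial has the required $Z$-degree. In the remaining cases (YM with $n = 4$ and GR at every $n \geq 4$) the $Z$-weight is nonnegative, so a crude degree count is not enough. Here I would combine the $Y_i$-gradings — which bound the polarization weights of each individual particle to $h \in \{1,2\}$ — with the gauge relations $X_i$ to exclude the finite-dimensional space of admissible monomials. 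This is precisely the general vanishing mechanism that the paper isolates as Lemma \ref{lemma:vanish}, and which is independently needed later in Step 2 of the main theorem's inductive proof.

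The main obstacle is this last step. Bounded polynomial degree together with polarization homogeneity reduces the problem to a finite linear system, but showing that the $X_i$-relations actually kill the whole system — rather than leaving stray invariants like pure-$e$ matching terms in low $n$ — requires an explicit analysis of which monomials can be gauge invariant under the $X_i$. Once Lemma \ref{lemma:vanish} is in hand this is routine, but without it this is the genuine content of uniqueness; everything before it is simple bookkeeping with residues and induction.
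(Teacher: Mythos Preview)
Your proposal is correct and follows essentially the same inductive route as the paper: the difference of two candidates has zero residue along every $\xi=0$ hyperplane by induction, hence is a polynomial annihilated by $X_i$, $Y_i$, $Z$, hence vanishes by Lemma~\ref{lemma:vanish}. The only presentational difference is that the paper explicitly invokes the Hartogs extension theorem to pass from ``regular away from the poles'' to ``globally regular,'' addressing the codimension-two locus where two or more hyperplanes intersect --- a step you pass over, though for rational functions it can equally be settled by the algebraic observation that the distinct $\xi$'s are pairwise coprime linear forms.
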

\begin{proof}
The proof is not sensitive to the
constants in \eqref{eq:resid},
part of which is a matter of normalization.
The proof is by induction on $n = |I|$.
The $n=3$ amplitudes are fixed.
For $n\geq 4$,
the residues of any two candidate amplitudes
are the same by \eqref{eq:resid} and by the induction hypothesis.
So the difference between any two candidates,
call it $u \in \Frac(R_{C(I)})$,
has no poles and is regular, except perhaps where the poles intersect.
But the union of all pairwise intersections of poles
has codimension two, and by the Hartogs extension theorem
(cf.~\cite{nr} for more comments about this in the context of $d=4$ amplitudes),
$u$ is actually globally regular.
This means that $u \in R_{C(I)}$, a polynomial.
Since $u$ is annihilated by $X_i$ and $Y_i$ and $Z$,
Lemma \ref{lemma:vanish} below implies $u = 0$.
\qed\end{proof}
Lemma \ref{lemma:vanish}
  below is not only used to prove uniqueness in Lemma \ref{lemma:uniq},
  it will also play a key role in our proof of Theorem \ref{theorem:main},
  hence the additional parameters $p_i$ and $q$.
\begin{lemma}[Vanishing lemma]\label{lemma:vanish}
Suppose $n = |I| \geq 4$
and suppose $X_i,Y_i,Z$ are as in \eqref{eq:xyz},
either with the parameters for YM or with the parameters for GR.
Suppose $p_i \geq 0$ with $i \in I$
are integers with $\sum_{i \in I} p_i < n$,
and $q \geq 0$ is another integer.
Suppose a polynomial $u \in R_{C(I)}$ satisfies:
\[
X_i u = 0
\qquad (Y_i + p_i) u = 0
\qquad (Z + q) u = 0
\]
for all $i \in I$. Then $u = 0$.
\end{lemma}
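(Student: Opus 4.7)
The plan is to combine the homogeneity constraints from $\sum_i(Y_i+p_i)u=0$ and $(Z+q)u=0$ with gauge invariance $X_iu=0$. First I would observe that every homogeneous $(d_k,d_c,d_e)$-piece of $u$ must satisfy
\[
 2d_k+d_c=s-q,\qquad d_c+2d_e=nh-\textstyle\sum_ip_i.
\]
A short case check then shows that $d_e>0$ for every such piece under the hypothesis $\sum p_i<n$ and $n\geq 4$: for YM this is because $2d_k+d_c=4-n-q\leq 0$ already forces $n=4$, $q=0$, and $d_k=d_c=0$, leaving a polynomial in $e$-variables of positive total degree; for GR the bound $d_c\leq s-q\leq 2$ gives $d_e\geq(n-1)/2>0$.

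Next I would exploit the tridegree shifts of the two summands of $X_i=\sum_j k_{ji}D_{c_{ji}}+\sum_j c_{ij}D_{e_{ij}}$, which change $(d_k,d_c,d_e)$ by $(+1,-1,0)$ and $(0,+1,-1)$ respectively. Letting $u^{\top}$ denote the piece of $u$ of maximal $d_c$, the component of $X_iu=0$ at the largest $d_c$-value collects only the second summand acting on $u^{\top}$, so
\[
 \textstyle\sum_{j\neq i}c_{ij}\,D_{e_{ij}}\,u^{\top}=0\quad\text{for every }i\in I.
\]
In the YM case $u^{\top}=u$ is $c$-free, the coefficients $D_{e_{ij}}u^{\top}$ are then $c$-free, and the linear independence of $\{c_{ij}\}_{j\neq i}$ in $C(I)^\ast$ (the relation $\sum_a c_{ab}=0$ only constrains the first index) gives $D_{e_{ij}}u^{\top}=0$ for all $j\neq i$; so $u^{\top}$ is constant, contradicting $d_e>0$, and $u=0$.

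For the GR case I plan to proceed by strong induction on $d_e$. The key observation is that $D_{e_{ij}}$ commutes with every $X_a$ and with $Z$, and $[Y_a,D_{e_{ij}}]=-(\delta_{ai}+\delta_{aj})D_{e_{ij}}$; hence $D_{e_{ij}}u$ again satisfies the hypotheses of the lemma, with $p_a$ replaced by $p_a+\delta_{ai}+\delta_{aj}$, so $\sum p'_a=\sum p_a+2$. Whenever $\sum p_a+2<n$, the inductive hypothesis applies at $e$-degree $d_e-1$ and yields $D_{e_{ij}}u=0$ for every $i\neq j$; thus $u$ has no $e$-dependence, in contradiction with $d_e>0$, so $u=0$.

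The main obstacle lies in the boundary cases $\sum p_a\in\{n-2,n-1\}$, where $\sum p_a+2\geq n$ and the induction on $d_e$ does not close. There I would argue directly, combining the system $\sum_{j\neq i}c_{ij}D_{e_{ij}}u^{\top}=0$ across all $i$ with the symmetry $D_{e_{ij}}u^{\top}=D_{e_{ji}}u^{\top}$ and the linear relations $\sum_a c_{ab}=\sum_a k_{ab}=0$, in order to pin down the narrow remaining set of monomial patterns for $u^{\top}$ and force them to vanish.
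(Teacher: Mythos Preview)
Your YM argument is correct and coincides with the paper's treatment of the $e$-only case. The GR argument, however, has a real gap precisely where you flag it: the boundary cases $\sum_i p_i \in \{n-2,n-1\}$. Your induction on $d_e$ only closes when $\sum_i p_i + 2 < n$, and the fallback you sketch --- using only the top-$d_c$ relation $\sum_{j\neq i} c_{ij} D_{e_{ij}} u^{\top}=0$ together with the symmetry $D_{e_{ij}}=D_{e_{ji}}$ and the $c,k$ relations --- is not enough. Concretely, for GR with $q=0$ one has $u = \sum P_{abcd}(e)\,c_{ab}c_{cd} + \sum Q_{ab}(e)\,k_{ab}$, and your $u^{\top}$-relation is exactly the paper's $W_i=0$, a cubic-in-$c$ identity for the $P$'s. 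The paper shows that $W_i=0$ alone forces all $D_{e}P=0$ only when $n\geq 5$; at $n=4$ one must also use the mixed relation $V_i=0$ coming from the $k$-linear part of $X_i u=0$, which couples $P$ to the $Q$-piece. Your sketch does not touch the $Q$-piece at all. This boundary case is not academic: it is exactly what is needed for $A_j$ at $n=4$ (where $\sum p_i = 2 = n-2$).

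Where your induction does apply ($\sum p_i \leq n-3$), it is a genuinely different and slicker route than the paper's. The paper does not induct on $d_e$; it instead enumerates the finitely many $(d_k,d_c)$-profiles allowed by $(Z+q)u=0$ (four cases for GR, two for YM), writes $u$ schematically with $e$-polynomial coefficients, and extracts from $X_iu=0$ separate identities by $(k,c)$-scaling. The hardest case (GR, $q=0$) is handled by deriving two families of linear identities (the paper's $V_i=0$ and $W_i=0$) and showing they force $P=Q=0$, with $n=4$ requiring both. To close your argument you would have to reproduce essentially that analysis in the boundary cases; the hand-wave you give does not.
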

\begin{proof}
Distinguish the cases in Table \ref{table:vanish}.
The schematic structure refers to $u$ as a polynomial in the $c$, $k$ variables
with coefficients that are polynomials in the $e$ variables.
The schematic structure follows from $(Z+q)u = 0$.
The polynomials $P(e)$ and $Q(e)$ are homogeneous.
Their homogeneity degrees are given in terms
of $p = \sum_{i\in I} p_i < n$ and follow from $\sum_{i \in I} (Y_i + p_i)u = 0$.
Note that $P$ is a single homogeneous polynomial in case 3, whereas $P$ and $Q$ are
schematic for several homogeneous polynomials in cases 1 and 2.
\begin{table}
\begingroup
\renewcommand{\arraystretch}{1.4}
\begin{center}{\footnotesize
\begin{tabular}{r||r|l|l|l}
case & situation & schematic structure of $u$ & $\deg P$ & $\deg Q$\\
\hline\hline
1 & GR, $q = 0$ & $P(e)cc + Q(e)k$ & $n-\tfrac{p}{2}-1$ & $n-\tfrac{p}{2}$\\
2 & GR, $q = 1$ & $P(e)c$ & $n-\tfrac{p}{2}-\tfrac12$ \\
3 & GR, $q = 2$ & $P(e)$ &  $n-\tfrac{p}{2}$ \\
4 & GR, $q > 2$ & $0$ & \\
\hline
3 & YM, $n=4$ and $q=0$ & $P(e)$ & $\tfrac{n}{2} - \tfrac{p}{2} = 2-\tfrac{p}{2}$ &\\
4 & YM, $n>4$ or $q>0$ & 0  & 
\end{tabular}}
\end{center}
\caption{
List of cases in the proof of Lemma \ref{lemma:vanish}.
}\label{table:vanish}
\endgroup%
\end{table}
Since $p < n$ we have $\deg P > 0$, $\deg Q > 0$,
in particular $P$, $Q$ cannot be constant,
so they are zero if their derivatives are zero.
We discuss each case:
\begin{itemize}
\item Case 4: Here $u = 0$.
\item Case 3: Here $P(e)$ is a single polynomial.
Since $X_i u = 0$ for all $i$, we have
\[
\textstyle\sum_{j \in I} c_{ij} D_{e_{ij}} P = 0
\]
There is no sum over $i$.
For every fixed $i$ the only relations between the
$(c_{ij})_{j \in I}$ are $c_{ii} = 0$,
which suffices to conclude
$D_{e_{ij}} P = 0$. So $u=P=0$.
\item Case 2: Here $u = \sum_{a,b \in I} P_{ab} c_{ab}$
for some polynomials $P_{ab}  = P_{ab}(e)$.
Using \eqref{eq:linrel} we may assume $\sum_a P_{ab} = 0$ and $P_{aa} = 0$.
Since $X_i u = 0$ for all $i$,
\[
    \textstyle
    \sum_j k_{ji} P_{ji}
    +
    \sum_{a,b,j} c_{ab} c_{ij} D_{e_{ij}} P_{ab}
    = 0
\]
Since the terms scale differently in $c$ and $k$, they are separately zero.
Continue with $\sum_j k_{ji} P_{ji} = 0$. There is no sum over $i$.
Since $n \geq 4$, for every fixed $i$ the only relations between the
$(k_{ji})_{j \in I}$ are $k_{ii} = 0$ and $\sum_j k_{ji} = 0$, which implies $P_{ji} = 0$. So $u = 0$.
\item Case 1: Here $u = \sum_{a,b,c,d} P_{abcd} c_{ab}c_{cd} + \sum_{a,b} Q_{ab} k_{ab}$.
Using \eqref{eq:linrel}, we may assume $P_{abcd} = P_{cdab}$, $\sum_a P_{abcd} = 0$, 
$P_{aacd} = 0$, $Q_{ab} = Q_{ba}$, $\sum_a Q_{ab} = 0$, $Q_{aa} = 0$.
Using $X_iu=0$ we have
$V_i + W_i = 0$ using the abbreviations
\begin{align*}
V_i & = 
\textstyle
\sum_{a,b,j} (2k_{ji}c_{ab}P_{abji} + k_{ab} c_{ij} D_{e_{ij}} Q_{ab})\\
W_i & = 
\textstyle\sum_{a,b,c,d,j} c_{ab}c_{cd}c_{ij} D_{e_{ij}}P_{abcd}
\end{align*}
Since they scale differently in $c$ and $k$,
we have $V_i = W_i = 0$.
\begin{itemize}
\item It follows from $V_i = 0$
that (reasoning as in case 2) if all $P$ are zero then
all derivatives of all $Q$ are zero and then $Q=0$, since $\deg Q > 0$.
The problem is thus reduced to showing that all $P$ are zero.
\item
From $(1-\delta_{ai})D_{c_{ak}} V_i - (1-\delta_{ak})D_{c_{ai}} V_k = 0$ we get
(since this particular combination eliminates the $Q$ terms using $D_{e_{ik}} = D_{e_{ki}}$):
\begin{equation}\label{eq:n4}
\textstyle (1-\delta_{ai}) \sum_j k_{ji} P_{akji}
- (1-\delta_{ak}) \sum_j k_{jk} P_{aijk} = 0
\end{equation}
for all $a,i,k \in I$. By differentiating with respect to $D_{k_{cd}}$, one obtains
linear identities with constant coefficients for the $P$.
\item 
From $D_{c_{ab}}D_{c_{cd}}D_{c_{ef}}W_i = 0$ we get
\begin{multline}\label{eq:u1}
(1-\delta_{ab})
(\delta_{ai}-\tfrac{1}{n-1}) D_{e_{ib}} P_{cdef}
+
(1-\delta_{cd})
(\delta_{ci}-\tfrac{1}{n-1}) D_{e_{id}} P_{efab}\\
+
(1-\delta_{ef})
(\delta_{ei}-\tfrac{1}{n-1}) D_{e_{if}} P_{abcd}
=
0
\end{multline}
for all $a,b,c,d,e,f,i \in I$.
\end{itemize}
If $n \geq 5$ then \eqref{eq:u1} alone implies
that all derivatives of all $P$ are zero\footnote{%
Use \eqref{eq:u1} viewed as linear homogeneous identities
with constant coefficients for the $D_{e_{ij}} P_{abcd}$,
simplified using the algebraic conditions on $P$
and $D_{e_{ii}} = 0$, $D_{e_{ij}} = D_{e_{ji}}$.}
so $P=0$. If $n = 4$ then combining \eqref{eq:n4}, \eqref{eq:u1}
also yields $P = 0$. So $u=0$.
\end{itemize}
\qed\end{proof}
\section{Some commutators}\label{sec:comms}

Here we compute a number of commutators,
showing for instance that the $B_i$ appear in the commutators of type $[X_i,C_j]$.
These commutators are also used to prove Theorem \ref{theorem:main}.
In fact, Corollary \ref{corollary:xyz} contains identities for $A_j f$, $B_j f$, $C_j f$
with $f$ the amplitude,
and this will be used together with Lemma \ref{lemma:key} by which
$A_j f$, $B_j f$, $C_j f$ are pole-free,
and together with Lemma \ref{lemma:vanish},
to prove Theorem \ref{theorem:main}
which says that $A_j f$, $B_j f$, $C_j f$ vanish.

\begin{lemma}\label{lemma:comms}
Let $|I| \geq 3$. In $D_{C(I)}$ we have the following identities:
\begin{align*}
[X_i,A_j] & = -(1-\delta_{ij}) \tfrac{1}{|I|-1} D_{e_{ij}} X_j\\
[X_i,B_j] & = -2\delta_{ij} A_i
-(1-\delta_{ij})(
\textstyle\frac{2}{|I|-2} D_{c_{ji}} X_j
-D_{c_{ij}} X_i
-D_{e_{ij}} Y_i
+\frac{1}{|I|-1} D_{e_{ij}} (Y_j + Z))\\
[X_i,C_j] & =
- \delta_{ij} B_i + \tfrac{1}{|I|} B_i 
+ (1-\delta_{ij})(
  D_{k_{ij}} X_i
+ D_{c_{ji}} Y_i
- \tfrac{1}{|I|-2} D_{c_{ji}} Z)
\end{align*}
as well as
\begin{align*}
[Y_i,A_j] & = -2\delta_{ij} A_j &
[Z,A_j] & = 0 \\
[Y_i,B_j] & = -\delta_{ij} B_j &
[Z,B_j] & = -B_j\\
[Y_i,C_j] & = 0 &
[Z,C_j] & = -2C_j
\end{align*}
for all $i,j \in I$.
All these identities hold for both YM and GR;
recall that the parameters defining $Y_i$ and $Z$ are different in these two cases.
\end{lemma}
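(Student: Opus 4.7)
The plan is to verify all six identities by direct computation in the Weyl algebra $D_{C(I)}$, using only the elementary commutators from Lemma \ref{lemma:dops} and the Leibniz rule $[PQ,R] = P[Q,R] + [P,R]Q$ for commutators of products. The six identities split into two groups of very different difficulty: the scaling commutators with $Y_i$ and $Z$, and the gauge commutators with $X_i$.

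The identities involving $Y_i$ and $Z$ are essentially free. Each of $A_j$, $B_j$, $C_j$ is a sum of monomials that are simultaneously homogeneous under the two gradings, the $Y_i$-polarization grading at particle $i$ (with $|k_{ab}|=0$, $|c_{ab}|=\delta_{bi}$, $|e_{ab}|=\delta_{ai}+\delta_{bi}$, and opposite weights for the $D$-operators) and the $Z$-momentum grading. A check on a single representative monomial from each of $A_j$, $B_j$, $\widetilde C_j$ gives the $Y_i$-weight as $-2\delta_{ij}$, $-\delta_{ij}$, $0$ and the $Z$-weight as $0$, $-1$, $-2$ respectively, which already yields the six claimed identities (the shift $-\frac{1}{n}\sum_l \widetilde C_l$ defining $C_j$ is itself $Z$-homogeneous of weight $-2$ and $Y_i$-neutral, so it does not affect the conclusions).

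For the three identities $[X_i, A_j]$, $[X_i, B_j]$, $[X_i, C_j]$ the approach is bookkeeping. Write $X_i = \sum_l (k_{li} D_{c_{li}} + c_{il} D_{e_{il}})$ and expand each commutator term by term against each monomial of $A_j$, $B_j$, $\widetilde C_j$. Because the coordinate factors in $X_i$ commute with $A_j$, $B_j$, $\widetilde C_j$ except for the $k$-coordinate occurrences and $c$-coordinate occurrences inside them, the only nontrivial elementary commutators are the three types $[D_{c_{li}}, c_{ab}]$, $[D_{e_{il}}, e_{ab}]$, together with $[k_{li}, D_{c_{\cdot j}}]$ for the $B_j$, $\widetilde C_j$ case; all of these are read off Lemma \ref{lemma:dops}. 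The resulting sum has two kinds of terms after Leibniz expansion: genuine second-order pieces (coord$\cdot D\cdot D$) that must be matched against $A_i$ or $B_i$, and `transferred' pieces with one derivative consumed that must be recognised as a composition $D_{\bullet}\cdot X_j$, $D_{\bullet}\cdot Y_j$, or $D_{\bullet}\cdot Z$. The linear relations \eqref{eq:dlinrel} (notably $\sum_l D_{c_{lj}} = 0$ and $\sum_l D_{k_{lj}} = 0$) and \eqref{eq:linrel} must be applied repeatedly to collapse the $(1-\delta_{ab})$ and $-\frac{1}{n-1}\delta_{\cdot}$ terms from Lemma \ref{lemma:dops} into the required form.

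The main obstacle is the length of the calculation for $[X_i, C_j]$, both because $\widetilde C_j$ has eight monomial types each summed over two indices and because one must track the corrective sum $-\frac{1}{n}\sum_l \widetilde C_l$ that defines $C_j$; this correction is precisely what produces the $+\frac{1}{n} B_i$ contribution on the right hand side (the $\widetilde{}$-less $B_i$ appears because the symmetrization over $l$ averages the `distinguished index' structure into a permutation-invariant combination). It is convenient to split $[X_i, \widetilde C_j]$ into its diagonal $i=j$ part, which produces $-B_i$ modulo the first-order tail, and its off-diagonal part, and to verify these two cases separately; the same bookkeeping in easier form handles $[X_i, B_j]$, where the diagonal part yields $-2A_i$, and $[X_i, A_j]$, where only the off-diagonal part survives. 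No conceptual step beyond Lemma \ref{lemma:dops} is required; the computation is designed to be done (and cross-checked) by computer algebra, as alluded to in Appendix \ref{app:code}.
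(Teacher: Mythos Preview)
Your proposal is correct and takes essentially the same approach as the paper, whose proof reads in full: ``By direct calculation using equations \eqref{eq:xyz}, \eqref{eqs:abc} and Lemma \ref{lemma:dops}.'' Your elaboration---reading off the $Y_i$- and $Z$-commutators from homogeneity weights, and expanding the $X_i$-commutators term by term via Leibniz and the elementary brackets of Lemma \ref{lemma:dops}---is exactly what that sentence means in practice. One small slip: in your list of nontrivial elementary commutators you write ``$[k_{li}, D_{c_{\cdot j}}]$'' where you need $[k_{li}, D_{k_{\cdot\cdot}}]$ (and you will also meet $[c_{il}, D_{c_{\cdot\cdot}}]$), but this does not affect the strategy.
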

\begin{proof}
By direct calculation using
equations \eqref{eq:xyz}, \eqref{eqs:abc} and Lemma \ref{lemma:dops}.
\qed\end{proof}
\begin{corollary}\label{corollary:xyz}
Suppose $f$ is either the YM amplitude, $f = A_I$,
or the GR amplitude, $f = M_I$, as in Definition \ref{def:rec}. Then for all $i,j \in I$:
\begin{subequations}
\begin{align}
    \label{eq:acom}
    X_i(A_j f) & = 0 &
    (Y_i + 2\delta_{ij}) (A_j f) & = 0 &
     Z (A_j f) & = 0
\intertext{If $A_j f = 0$ for all $j$ then}
    \label{eq:bcom}
    X_i(B_jf) & = 0 &
    (Y_i + \delta_{ij}) (B_j f) & = 0 &
    (Z+1) (B_j f) & = 0
\intertext{If $B_j f = 0$ for all $j$ then}
   \label{eq:ccom}
   X_i (C_j f) & = 0 &
   Y_i(C_j f) & = 0 &
   (Z+2) (C_j f) & = 0
\end{align}
\end{subequations}
\end{corollary}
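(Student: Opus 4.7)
The plan is straightforward: each identity in the corollary follows by writing the action of $X_i$, $Y_i$, or $Z$ on $A_j f$, $B_j f$, $C_j f$ as a commutator plus a term in which the outer operator hits $f$ directly, and then applying Lemma \ref{lemma:comms} together with the known first order annihilators.

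\emph{Setup.} Recall from Definition \ref{def:rec} (gauge invariance, polarization homogeneity, momentum homogeneity) that the amplitude $f$ is annihilated by all $X_i$, $Y_i$, and by $Z$. For any differential operator $P$ and any $j$ one has $P(A_j f) = [P,A_j] f + A_j (P f)$, and similarly with $B_j$ or $C_j$ in place of $A_j$. When $P \in \{X_i, Y_i, Z\}$ the second summand vanishes, so the claim reduces to showing $[P,A_j] f = 0$, $[P,B_j] f = \lambda f'$, $[P,C_j] f = \mu f''$ for the appropriate multiples $\lambda, \mu$ and objects $f', f''$ proportional to $A_j f$, $B_j f$, $C_j f$ as dictated by the statement.

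\emph{Equation \eqref{eq:acom}.} The $[Y_i,A_j]$ and $[Z,A_j]$ commutators in Lemma \ref{lemma:comms} give directly $Y_i(A_j f) = -2\delta_{ij}A_j f$ and $Z(A_j f) = 0$. For the $X_i$ case, $[X_i,A_j] = -(1-\delta_{ij})\tfrac{1}{|I|-1}\,D_{e_{ij}} X_j$; applying this to $f$ and noting that $X_j$ sits on the right, we compute $(D_{e_{ij}} X_j) f = D_{e_{ij}}(X_j f) = 0$ by gauge invariance of $f$.

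\emph{Equations \eqref{eq:bcom} and \eqref{eq:ccom}.} These go by the same mechanism, but with one additional input. The formula for $[X_i,B_j]$ is a sum of a term $-2\delta_{ij}A_i$ and terms of the form (constant coefficient $D$-operator)$\cdot$(one of $X_i, X_j, Y_i, Y_j, Z$). Under the hypothesis $A_k f = 0$ for all $k$ the first term annihilates $f$, while the remaining terms vanish as in the paragraph above because $X_i, Y_i, Z$ all annihilate $f$. The $[Y_i, B_j]$ and $[Z, B_j]$ commutators are scalar multiples of $B_j$, giving the stated eigenvalues $-\delta_{ij}$ and $-1$. For \eqref{eq:ccom}, the commutator $[X_i, C_j] = -\delta_{ij}B_i + \tfrac{1}{|I|}B_i + (1-\delta_{ij})(D_{k_{ij}}X_i + D_{c_{ji}}Y_i - \tfrac{1}{|I|-2}D_{c_{ji}}Z)$ acts on $f$: the $B_i$ pieces vanish by the new hypothesis $B_k f = 0$, and the remaining pieces vanish because the operators $X_i, Y_i, Z$ are on the right and annihilate $f$. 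The commutators $[Y_i, C_j] = 0$ and $[Z, C_j] = -2C_j$ give the eigenvalue identities.

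\emph{Obstacles.} There is essentially no obstacle; the content of the proof lies in having already established Lemma \ref{lemma:comms}. The only point requiring a moment's care is that in each commutator formula the scalar first order operators $X_\bullet, Y_\bullet, Z$ appear to the right of the constant coefficient $D$-operators, so that the $D$-operators pass freely through them when applied to $f$ and reduce the whole term to a $D$-operator applied to $0$.
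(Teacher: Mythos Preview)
Your proof is correct and follows exactly the same approach as the paper's own proof, which is the one-line ``Use Lemma \ref{lemma:comms} and the fact that $X_i f = Y_i f = Z f = 0$ for all $i$.'' You have simply spelled out the commutator-plus-remainder mechanism in detail, including the relevant observation that in each commutator formula the annihilators $X_\bullet, Y_\bullet, Z$ sit on the right.
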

\begin{proof}
Use Lemma \ref{lemma:comms} and the fact that $X_i f = Y_i f = Z f = 0$ for all $i$.
\qed\end{proof}

\section{Proof of Theorem \ref{theorem:main}} \label{sec:proof1}
The annihilator of the YM amplitude $A_I$
is a left ideal in the Weyl algebra $D_{C(I)}$,
and likewise the annihilator of the GR amplitude $M_I$.
The elements $X_i$, $Y_i$, $Z$ are well-known to be in the annihilator\footnote{%
In the logic of this paper, this is enshrined in Definition \ref{def:rec}.},
and so are some constant coefficient operators of order 2 for YM, order 3 for GR,
that witness polynomiality in the $c$ and $e$ variables. Theorem
\ref{theorem:main} asserts that $A_i$, $B_i$, $C_i$ are also in the annihilator.
The proof, given at the end of this section, is by induction on $|I|$.
The induction step uses the following lemma.
\begin{lemma}[Key technical lemma] \label{lemma:key}
Let $U$ be as in Definition \ref{def:u}.
Suppose $I = J \sqcup K$ with $|J|, |K| \geq 2$ as before.
Suppose Theorem \ref{theorem:main}
holds for the index sets $\JB$ and $\KB$.
Cyclic orders are understood in the case of YM.
Then if
\[
	\mathbf{O} \in D_{C(I)} \subset D_M
\]
is one of $A_i$, $B_i$, $C_i$ with $i \in I$, then both
\begin{align*}
        \mathbf{O} U &\frac{\down^\ast_J(A_\JB) \down^\ast_K(A_\KB)}{\xi}\\
        \mathbf{O} U^2 &\frac{\down^\ast_J(M_\JB) \down^\ast_K(M_\KB)}{\xi}
\end{align*}
are elements of $\Frac(R_M)$ without pole along $\xi = 0$.
(More precisely,
 they are in the localization of $R_M$ at the codimension one prime ideal generated by $\xi$.)
\end{lemma}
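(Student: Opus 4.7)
The plan is to decompose $M = \C\Xi \oplus \Mp$ and expand $\mathbf{O}$ in powers of $\xi$ and $\Xi$ with coefficients in the transverse Weyl subalgebra that commutes with both. Using Definition \ref{def:xixi}, each $D_{k_{ij}}$ appearing in $\mathbf{O}$ splits as $\Dp_{k_{ij}} + \lambda_{ij}\Xi$ and each $k_{ij}$ splits as $\kp_{ij} + \mu_{ij}\xi$ for explicit constants $\lambda_{ij}, \mu_{ij}$; the $c, e$ variables and $D_c, D_e$ operators are transverse automatically. By Lemma \ref{lemma:dpx} the coefficients $\mathbf{O}_{a,b}$ in
\[
\mathbf{O} = \sum_{a \leq 1,\, b \leq 2}\xi^a\Xi^b\,\mathbf{O}_{a,b}
\]
commute with $\xi$ and $\Xi$. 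Here $a \leq 1$ because each term of $A_i, B_i, C_i$ carries at most one $k$ coefficient, and $b \leq 2$ by operator order.

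Set $f = U\,\down_J^\ast(A_\JB)\down_K^\ast(A_\KB)$ in the YM case and $f = U^2\,\down_J^\ast(M_\JB)\down_K^\ast(M_\KB)$ in the GR case. The image of $\down_J^\ast$ lies in the subalgebra of $D_M$ generated by $\kp, \cp, \ep$ and the $E$-coordinates, all annihilated by $\Xi$; the same holds for $\down_K^\ast$ and for $U$, so $\Xi f = 0$. Iterating $[\Xi,\xi]=2$ one finds $\Xi^b\cdot\xi^{-1}f = \xi^{-1}f,\ -2\xi^{-2}f,\ 8\xi^{-3}f$ for $b = 0, 1, 2$. Collecting Laurent coefficients, regularity of $\mathbf{O}(\xi^{-1}f)$ at $\xi = 0$ reduces to three vanishings modulo $\xi$:
\[
\mathbf{O}_{0,2}\,f \equiv 0,\qquad \mathbf{O}_{0,1}\,f - 4\,\mathbf{O}_{1,2}\,f \equiv 0,\qquad \mathbf{O}_{0,0}\,f - 2\,\mathbf{O}_{1,1}\,f \equiv 0.
\]

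The first condition is algebraic: $A_i$ and $B_i$ contain no $D_k D_k$ term, so $\mathbf{O}_{*,2} = 0$; for $C_i$ the $\Xi^2$-coefficient comes from the $D_k D_k$ terms, and I expect the antisymmetric sign structure in the two halves of \eqref{eqs:c}, combined with the symmetry of the $\lambda$'s and the relations \eqref{eq:dlinrel}, to force the result to vanish identically as an operator. The substance of the lemma is the remaining two conditions. For each, push $\mathbf{O}_{a,b}$ across $U$ (resp.\ $U^2$) using commutators computed directly from Lemma \ref{lemma:dops}; this generates boundary terms in $D_{c_{j\bullet}}, D_{e_{j\bullet}}$ and their counterparts on $K$. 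Then translate via the pullbacks by Lemma \ref{lemma:pb}, which sends each transverse $\Dp_{k_{ij}}, \Dp_{c_{ij}}, \Dp_{e_{ij}}$ to a specific combination of $D$-operators on $C(\JB)$ or $C(\KB)$, modulo further $\bullet$-corrections that are in turn absorbed by $U$. The resulting sums should reorganize into copies of $A_j, B_j, C_j$ on $\JB$ acting on $A_\JB$ (and their $\KB$ counterparts, and the GR variants), zero by the induction hypothesis, together with applications of the first-order annihilators $X_j, Y_j, Z$ and the polynomial-degree annihilators in the $c, e$ variables, zero by Definition \ref{def:rec}.

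The main obstacle is bookkeeping: six cases in all ($\mathbf{O} \in \{A_i, B_i, C_i\}$ for YM and for GR), with $U^2$ roughly doubling the length of every expansion in the GR case. I would organize each case by partitioning the summations over $j, k$ in \eqref{eqs:abc} according to whether each index lies in $J$ or $K$ (the four resulting blocks being controlled by $\down_J^\ast$, $\down_K^\ast$, or cross-contractions handled by $U$), and within each block group terms by their order in the $\bullet$-derivatives that arise when commuting past $U$.
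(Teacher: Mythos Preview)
Your reduction is exactly the paper's: the splitting $\mathbf{O}=\sum_{a,b}\xi^a\Xi^b S_{ab}$ with $S_{ab}\in D_{\Mp}$, the observation $\Xi f=0$, and the three Laurent conditions coincide with the paper's $S_3f=0$, $S_2f=0$, $S_1f=0$ (where $S_3=S_{02}$, $S_2=S_{01}-4S_{12}$, $S_1=S_{00}-2S_{11}$). Since $f$ and the $S_{ab}$ live on $\Mp$, your ``modulo $\xi$'' is in fact exact vanishing. For $S_3=0$ in the $C_i$ case, the paper's argument is the one you sketch: the $k_{jk}D_kD_k$ terms of $\widetilde C_i$ contribute, and the combination $\tfrac12\cdot 2 + (-1)=0$ together with $\sum_{j,j'}\kp_{jj'}=0$ kills the $\Xi^2$ piece.

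Where you diverge from the paper is in handling $S_1f=0$ and $S_2f=0$. You propose to commute $S_{a,b}$ through $U$ (or $U^2$) by hand and reorganize. The paper instead introduces a second direct-sum decomposition
\[
\Mp \;=\; \image\pi \,\oplus\, \image\up_J \,\oplus\, \image\up_K
\]
via explicit right inverses $\up_J,\up_K$ of $\down_J,\down_K$ (Appendix~\ref{app:kref}), which gives Weyl-algebra isomorphisms $\pbc_J:D_{C(\JB)}\to D_{\image\up_J}$ and $\pbc_K:D_{C(\KB)}\to D_{\image\up_K}$. This lets the paper repackage the goal as membership of $S_1U$, $S_2U$ (or $S_1U^2$, $S_2U^2$) in an explicit left ideal of $D_{\Mp}$ generated by known annihilators, and then grade by polynomial degree $d$ in $R_{\image\pi}$ via a map $\rho_d$. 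Two payoffs: first, a nontrivial simplification $\rho_1(S_1)=\rho_1(S_2)=0$ holds for $A_j,B_j,C_j$ but \emph{fails} for $\widetilde C_j$, which explains why the trace-subtraction in $C_i$ is essential; second, the remaining checks become a finite list of explicit Weyl-algebra identities (Tables~\ref{table:ym}--\ref{table:gr}) verifiable by direct computation. Your partition by $J/K$ membership is morally the same grading, but without the $\image\pi$ decomposition you have no clean separation of the ``cross'' terms from the factors that pull back to $C(\JB)$ and $C(\KB)$.

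One concrete gap in your outline: the annihilators you list are not enough. The paper's identities use, in addition to $A_j,B_j,C_j,X_j,Y_j,Z$ on the $\JB$ side, the $\bullet$-leg operators $A_\bullet,B_\bullet,C_\bullet,X_\bullet,Y_\bullet$ on \emph{both} sides; for instance already the $d=0$ piece of $S_1U$ for $\mathbf{O}=A_j$ is proportional to $\pbc_K(X_\bullet)$, and the $C_j$ case needs $C_\bullet$ on both $\JB$ and $\KB$. If you try to push your plan through by brute commutation you will discover these are required; the paper's ideal-membership framing makes their role transparent from the start.
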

\begin{proof}
The computation will be in the Weyl algebra,
we will not directly work with rational functions.
The first step is to split off the $\Xi$ direction,
using the direct sum decomposition \eqref{eq:mdec}. 
At the Weyl algebra level,
\[
D_M = D_{\C \Xi} \otimes D_{\Mp}
\]
Here $D_{\C \Xi}$ is the Weyl algebra generated by $\xi$ and $\Xi$,
with $[\Xi,\xi] = 2$.
Every element of $D_{\C \Xi}$ commutes with every element of $D_{\Mp}$.
For every $\mathbf{O}$ we have
\begin{align*}
\mathbf{O} & = 
\begin{pmatrix}
1 & \xi
\end{pmatrix}
\begin{pmatrix}
S_{00} & S_{01} & S_{02}\\
S_{10} & S_{11} & S_{12}
\end{pmatrix}
\begin{pmatrix}
1\\
\Xi\\
\Xi^2
\end{pmatrix}
\end{align*}
for unique $S_{00}, \ldots, S_{12} \in D_{\Mp}$. 
This isolates all occurrences of $\xi$ and $\Xi$.
It is clear from \eqref{eqs:abc} that at most one $\xi$ and at most two $\Xi$ appear.
Set
\begin{align*}
S_0 & = S_{10}\\
S_1 & = S_{00} - 2S_{11} \displaybreak[0]\\
S_2 & = S_{01} - 4S_{12}\\
S_3 & = S_{02}
\end{align*}
Note that $\Xi$ commutes with $U \in D_{\Mp}$ from Definition \ref{def:u},
and the pushforward of $\Xi$ under both $\down_J$ and $\down_K$ is zero.
Hence it suffices to show that
\begin{subequations}
\begin{align}
\label{eq:ym1}
S_1 U & \down^\ast_J(A_\JB) \down^\ast_K(A_\KB) = 0\\
\label{eq:ym2}
S_2 U & \down^\ast_J(A_\JB) \down^\ast_K(A_\KB) = 0\\
S_3 U & \down^\ast_J(A_\JB) \down^\ast_K(A_\KB) = 0
\end{align}
\end{subequations}
for YM, corresponding (by definition of $S_0,\ldots,S_3$)
to the absence of $1/\xi$ and $1/\xi^2$
and $1/\xi^3$ terms respectively,
and analogously it suffices to show that
\begin{subequations}
\begin{align}
\label{eq:gr1}
S_1 U^2 & \down^\ast_J(M_\JB) \down^\ast_K(M_\KB) = 0\\
\label{eq:gr2}
S_2 U^2 & \down^\ast_J(M_\JB) \down^\ast_K(M_\KB) = 0\\
S_3 U^2 & \down^\ast_J(M_\JB) \down^\ast_K(M_\KB) = 0
\end{align}
\end{subequations}
for GR. Note that $S_0$ has dropped out of the computation since
it cannot generate a pole along $\xi = 0$. Note that:
\begin{itemize}
\item $S_2 = S_3 = 0$ for $\mathbf{O} = A_i$
because it involves no $k$-derivative,
hence no $\Xi$-derivatives.
This also implies $S_{11} = 0$ in this case.
\item $S_3 = 0$ for $\mathbf{O} = B_i$
because it involves no second $k$-derivatives,
hence no second $\Xi$ derivatives.
This also implies $S_{12} = 0$ in this case.
\item $S_3 = 0$ for $\mathbf{O} = C_i$ but this requires a calculation.
It suffices to show that the analogous claim holds
for $\widetilde{C}_i$, that is for the following terms in \eqref{eqs:c}:
\[
\textstyle
a\sum_{j,k \in I} k_{jk} D_{k_{ij}} D_{k_{ik}}
+
b\sum_{j,k \in I} k_{jk} D_{k_{ji}} D_{k_{jk}}
\]
with $a = \tfrac12$ and $b = -1$.
Use Definition \ref{def:xixi} to replace
$k_{ab} = \kp_{ab} + \text{const}_{ab} \cdot \xi$
and $D_{k_{ab}} = \Dp_{k_{ab}} + \text{const}_{ab} \cdot \Xi$
and keep only the $\kp_{ab}$ respectively $\Xi$ terms. 
Using the first equation in \eqref{eq:kperp}
and $2a+b = 0$, we get zero.
\end{itemize}
It now suffices to show
\begin{subequations}\label{eq:suff1}
\begin{align}
&\text{\eqref{eq:ym1}, \eqref{eq:gr1}
for $A_j, B_j, C_j$ for all $j \in J$.}\\
&\text{\eqref{eq:ym2},
 \eqref{eq:gr2}
for $B_j, C_j$ for all $j \in J$.}
\end{align}
\end{subequations}
The restriction to $i = j \in J$ is new and is without loss of generality.
It entails that the rest of this proof is not symmetric
under exchanging $J$ and $K$.
The rest of this proof exploits the direct sum decomposition
of $\Mp$ in \eqref{eq:decomp},
see also Corollary \ref{corollary:decomp}. At Weyl algebra level,
\[
D_{\Mp} = D_{\image \pi}
          \otimes \underbrace{D_{\image \up_J}}_{\simeq D_{C(\JB)}}
          \otimes \underbrace{D_{\image \up_K}}_{\simeq D_{C(\KB)}}
\]
The indicated isomorphisms
are established by $\down_J$, $\up_J$ and $\down_K$, $\up_K$.
Call them
\begin{align*}
\pbc_J : D_{C(\JB)} &\to D_{\image \up_J}\\
\pbc_K : D_{C(\KB)} &\to D_{\image \up_K}
\end{align*}
They are explicitly given by the formulas in Lemmas \ref{lemma:pb}, \ref{lemma:pushf}, \ref{lemma:rinvp}.
Let
$I_{\text{YM}} \subset D_{\Mp}$
respectively
$I_{\text{GR}} \subset D_{\Mp}$
be the left ideals generated by
(the difference between YM and GR is implicit in the parameters defining the $Y$ and $Z$ elements):
\begin{itemize}
\item The left ideal $I_\pi \subset D_{\image \pi}$ generated by all partial derivatives.
Equivalently, this is the annihilator of the constant functions on $\image \pi$.
\item The left ideal in $D_{\image \up_J}$ 
generated by the image under $\pbc_J$ of the following elements,
which are known annihilators
of $A_\JB$ respectively $M_\JB$:
\begin{subequations}\label{eq:aan}
\begin{equation}
A_j, A_\bullet, B_j, B_\bullet, C_j, C_\bullet, X_j, X_\bullet, Y_j, Y_\bullet, Z
\;\;\in\;\;
D_{C(\JB)}
\end{equation}
and\footnote{These are also known annihilators. They witness polynomiality in the $c$ and $e$ variables
related to `the polarization of particle $\bullet$'
with degrees fixed by $Y_{\bullet} \in D_{C(\JB)}$.}
\begin{align}\label{eq:aanpoly}
&\text{for YM:}\;\;\; D_{x_{j_1\bullet}} D_{y_{j_2\bullet}}
&&\text{for GR:}\;\;\; D_{x_{j_1\bullet}} D_{y_{j_2\bullet}} D_{z_{j_3\bullet}}
\end{align}
\end{subequations}
for all $x,y,z \in \{c,e\}$ and all $j_1,j_2,j_3 \in J$.
\item The left ideal in $D_{\image \up_K}$
generated by the image under $\pbc_K$ of the following elements,
which are known annihilators of $A_\KB$ respectively $M_\KB$:
\begin{subequations}\label{eq:ban}
\begin{equation}
A_\bullet, B_\bullet, C_\bullet, X_\bullet, Y_\bullet, Z
\;\;\in\;\;
D_{C(\KB)}
\end{equation}
and
\begin{align}\label{eq:banpoly}
&\text{for YM:}\;\;\; D_{x_{k_1\bullet}} D_{y_{k_2\bullet}}
&&\text{for GR:}\;\;\; D_{x_{k_1\bullet}} D_{y_{k_2\bullet}} D_{z_{k_3\bullet}}
\end{align}
\end{subequations}
for all $x,y,z \in \{c,e\}$ and all $k_1,k_2,k_3 \in K$.
\end{itemize}
It is part of the assumptions of this lemma,
and a consequence of Definition \ref{def:rec}, that \eqref{eq:aan} and \eqref{eq:ban}
are known annihilators of $A_\JB$, $M_\JB$ and $A_\KB$, $M_\KB$.
More annihilators are known, but we will not need those.
To show \eqref{eq:suff1} it now suffices to show that
(by construction of $I_{\text{YM}}$, $I_{\text{GR}}$
and using $\down_J \pi = 0$, $\down_K \pi = 0$):
\begin{subequations}\label{eq:susu}
\begin{align}
\label{eq:susuym1}
S_1 U & \in I_{\text{YM}} && \text{for $\mathbf{O} = A_j, B_j, C_j$}\\
\label{eq:susuym2}
S_2 U & \in I_{\text{YM}} && \text{for $\mathbf{O} = B_j, C_j$} \displaybreak[0]\\
\label{eq:susugr1}
S_1 U^2 & \in I_{\text{GR}} && \text{for $\mathbf{O} = A_j, B_j, C_j$}\\
\label{eq:susugr2}
S_2 U^2 & \in I_{\text{GR}} && \text{for $\mathbf{O} = B_j, C_j$}
\end{align}
\end{subequations}
Thus the problem is reduced to one of
checking membership in a left ideal in the Weyl algebra $D_{\Mp}$.
(For every fixed $|I|$ this can in principle be checked
 algorithmically using Gr\"obner bases.
 But we need to prove membership for all $|I|$.)
To proceed, we use the canonical $D_{\image \pi}/I_\pi \simeq R_{\image \pi}$.
Here $R_{\image \pi}$ are the polynomials on $\image \pi$.
This gives a canonical map
\[
\rho : D_{\Mp} \to {\underbrace{R_{\image \pi}}_{\simeq D_{\image \pi} / I_\pi}}
           \otimes D_{\image \up_J}
           \otimes D_{\image \up_K}
\]
Decompose $R_{\image \pi} = \textstyle\bigoplus_{d \geq 0} (R_{\image \pi})_d$
where $(R_{\image \pi})_d$ are all polynomials that are homogeneous of degree $d$.
Accordingly $\rho = \bigoplus_{d \geq 0} \rho_d$ where
\[
\rho_d : D_{\Mp} \to (R_{\image \pi})_d
           \otimes D_{\image \up_J}
           \otimes D_{\image \up_K}
\]
Clearly $\rho_d(S_1)$, $\rho_d(S_2)$ can only be nonzero for $d = 0,1$
because all $\mathbf{O}$ in \eqref{eqs:abc} have polynomial coefficients of order $\leq 1$.
We claim that actually
\begin{equation}\label{eq:rsrs}
\rho_1(S_1) = \rho_1(S_2) = 0
\end{equation}
for all $\mathbf{O} = A_j, B_j, C_j$ (an analogous statement fails for $\widetilde{C}_j$).
To see this, use the description of $\image \pi$ in Corollary \ref{corollary:decomp}.
Here are some examples:
\begin{itemize}
\item Consider the term
$\smash{\sum_{a,b \in I} e_{ab} D_{e_{aj}} D_{e_{bj}}}$ in $A_j$.
This term is already in $D_\Mp$, since $e = \smash{\ep}$ and
$D_e = \smash{\Dp_e}$.
If $a,b \in J$ or $a,b \in K$ then we get no contribution
to $\rho_1$ since $\ep_{ab}$ commutes with all elements of $\image \pi$
by Corollary \ref{corollary:decomp}.
If $a \in K$ then $\smash{\Dp_{e_{aj}}} \in I_\pi$
and if $b \in K$ then $\smash{\Dp_{e_{bj}}} \in I_\pi$
and we also get no contribution to $\rho_1$.
\item Consider next $\sum_{a,b \in I} c_{ab} D_{c_{aj}} D_{e_{bj}}$ in $A_j$.
By the same reasoning, it suffices to consider the sum over $a \in K$, $b \in J$.
For $a \in K$, write $D_{c_{aj}} = \Dp_{c_{aj}}$ as
\[
\Dp_{c_{aj}} = \textstyle(\Dp_{c_{aj}} - r) + r
\qquad \text{where} \qquad r = \frac{1}{|K|} \sum_{k' \in K} \Dp_{c_{k'j}}
\]
The first term does not contribute to $\rho_1$ since $\Dp_{c_{aj}} - r \in I_\pi$
by Corollary \ref{corollary:decomp}.
The second does not contribute to $\rho_1$ since
$\smash{\sum_{a \in K} \cp_{ab} = - \sum_{a \in J} \cp_{ab}}$
which for $b \in J$ commutes with all elements of $\image \pi$
by Corollary \ref{corollary:decomp}. 
\item Terms involving $k$ or $D_k$ are more complicated.
One must take into account $k = \kp + \text{const}\cdot \xi$
and $D_k = \Dp_k + \text{const} \cdot \Xi$, in Definition \ref{def:xixi}.
\end{itemize}
Using \eqref{eq:rsrs}
one can see that $\rho_d(S_1 U)$, $\rho_d(S_2 U)$ can only be nonzero for $d = 0,1$
and that $\rho_d(S_1 U^2)$, $\rho_d(S_2 U^2)$ can only be nonzero for $d = 0,1,2$.
In each of these cases, and for $\mathbf{O} = A_j, B_j, C_j$,
Tables \ref{table:ym} and \ref{table:gr} list elements
in \eqref{eq:aan} and \eqref{eq:ban}
that suffice to prove membership in $I_{\text{YM}}$ respectively $I_{\text{GR}}$,
as in \eqref{eq:susu}.
The tag `$\polyann$' subsumes all annihilators in \eqref{eq:aanpoly}, \eqref{eq:banpoly}
that witness polynomiality.

\newcommand{\sbx}{\rule{30mm}{0pt}}
\begin{table}
\centering
{\footnotesize
\begin{tabular}{c|c||l|l}
\sbx & $d$ & elements from \eqref{eq:aan} & elements from \eqref{eq:ban} \\
\hline
$\rho_d(S_1U)$ for $\mathbf{O} = A_j$
&   $0$ & & $X_{\bullet}$ \\
&   $1$ & $A_j$, $X_j$, $\polyann$ & \\
\hline
$\rho_d(S_2U)$ for $\mathbf{O} = B_j$
&   $0$ & & $X_{\bullet}$\\
&   $1$ & $X_j$, $\polyann$\\
\hline
$\rho_d(S_1U)$ for $\mathbf{O} = B_j$
&   $0$ & & $B_{\bullet}$, $X_{\bullet}$, $\polyann$\\
&   $1$ & $B_j$, $X_j$, $Y_j$, $Z$, $\polyann$ & $Z$, $\polyann$\\
\hline
$\rho_d(S_2U)$ for $\mathbf{O} = C_j$
&   $0$ & & $X_{\bullet}$\\
&   $1$ & $Z$, $\polyann$ & $Z$, $\polyann$\\
\hline
$\rho_d(S_1U)$ for $\mathbf{O} = C_j$
&   $0$ & & $B_{\bullet}$, $X_{\bullet}$, $\polyann$\\
&   $1$ & $C_j$, $C_{\bullet}$, $Z$, $\polyann$ & $C_{\bullet}$, $Z$, $\polyann$
\end{tabular}}
\caption{For YM,
this table lists elements
in \eqref{eq:aan} and \eqref{eq:ban}
that suffice to prove membership in $I_{\text{YM}}$
as in \eqref{eq:susuym1}, \eqref{eq:susuym2}. The list is not necessarily minimal.}\label{table:ym}
\end{table}
\begin{table}
\centering
{\footnotesize
\begin{tabular}{c|c||l|l}
\sbx & $d$ & elements from \eqref{eq:aan} & elements from \eqref{eq:ban} \\
\hline
$\rho_d(S_1U^2)$ for $\mathbf{O} = A_j$
&   $0$ & & $A_{\bullet}$ \\
&   $1$ & & $X_{\bullet}$ \\
&   $2$ & $A_j$, $X_j$, $\polyann$ & \\
\hline
$\rho_d(S_2U^2)$ for $\mathbf{O} = B_j$
&   $0$ & & \\
&   $1$ & & $X_{\bullet}$\\
&   $2$ & $X_j$, $\polyann$\\
\hline
$\rho_d(S_1U^2)$ for $\mathbf{O} = B_j$
&   $0$ & $X_{\bullet}$ & $A_{\bullet}$\\
&   $1$ & & $B_{\bullet}$, $X_{\bullet}$, $\polyann$\\
&   $2$ & $B_j$, $X_j$, $Y_j$, $Z$, $\polyann$ & $Z$, $\polyann$\\
\hline
$\rho_d(S_2U^2)$ for $\mathbf{O} = C_j$
&   $0$ & & \\
&   $1$ & & $X_{\bullet}$\\
&   $2$ & $Z$, $\polyann$ & $Z$, $\polyann$\\
\hline
$\rho_d(S_1U^2)$ for $\mathbf{O} = C_j$
&   $0$ & $A_{\bullet}$, $X_{\bullet}$ & $A_{\bullet}$\\
&   $1$ & & $B_{\bullet}$, $X_{\bullet}$, $\polyann$\\
&   $2$ & $C_j$, $C_{\bullet}$, $Z$, $\polyann$ & $C_{\bullet}$, $Z$, $\polyann$
\end{tabular}}
\caption{For GR,
this table lists elements
in \eqref{eq:aan} and \eqref{eq:ban}
that suffice to prove membership in $I_{\text{GR}}$
as in \eqref{eq:susugr1}, \eqref{eq:susugr2}. The list is not necessarily minimal.}\label{table:gr}
\end{table}
We now discuss these tables in detail.
The identities below are in 
\begin{equation}\label{eq:hos}
(R_{\image \pi})_0
           \otimes D_{\image \up_J}
           \otimes D_{\image \up_K}
\;\simeq\;
                   D_{\image \up_J}
           \otimes D_{\image \up_K}
\end{equation}
To extract the various `Taylor coefficients'
we use the $\Dw$ defined in Corollary \ref{corollary:decomp},
understood here as mapping $(R_{\image \pi})_d \to (R_{\image \pi})_{d-1}$.
We will not make explicit the `$\polyann$' pieces
and state some identities in the schematic form
\[
a = b \modp
\]
which asserts that $a-b$
is in the left ideal of \eqref{eq:hos} generated, via $\pbc_J$ and $\pbc_K$ respectively,
by \eqref{eq:aanpoly} and \eqref{eq:banpoly}.
With these preliminaries,
one row in Table \ref{table:gr} with $\mathbf{O} = C_j$ is
proved by the identities
\begin{equation}\label{eq:id4}
\begin{aligned}
\Dw_{k_{j_1k_1}} \rho_1(S_2 U^2) \;&=\;
2D_{c_{j_1\bullet}} D_{c_{k_1\bullet}} D_{c_{j \bullet}} \pbc_K(X_{\bullet})\\
\Dw_{c_{j_1k_1}} \rho_1(S_2 U^2) \;&=\;
2D_{c_{j_1\bullet}} D_{e_{k_1\bullet}} D_{c_{j \bullet}} \pbc_K(X_{\bullet})\\
\Dw_{c_{k_1j_1}} \rho_1(S_2 U^2) \;&=\;
2D_{e_{j_1\bullet}} D_{c_{k_1\bullet}} D_{c_{j \bullet}} \pbc_K(X_{\bullet})\\
\Dw_{e_{j_1k_1}} \rho_1(S_2 U^2) \;&=\;
2D_{e_{j_1\bullet}} D_{e_{k_1\bullet}} D_{c_{j \bullet}} \pbc_K(X_{\bullet})
\end{aligned}
\end{equation}
in the space \eqref{eq:hos},
they hold for all $j_1 \in J$ and $k_1 \in K$.
It is essential here that derivatives such as $D_{e_{k_1\bullet}}$ are to the left
of $\pbc_K(X_{\bullet})$.
On the other hand,
since  $\pbc_K(D_{e_{k_1\bullet}}) = D_{e_{k_1\bullet}}$
by Lemmas \ref{lemma:pushf} and \ref{lemma:rinvp},
it does not matter if this derivative is written inside or outside of $\pbc_K$.
We abbreviate
\begingroup
\renewcommand{\arraystretch}{1.15}
\begin{align*}
\Dw_1 & = \begin{pmatrix}
\Dw_{k_{j_1k_1}}\\
\Dw_{c_{j_1k_1}}\\
\Dw_{c_{k_1j_1}}\\
\Dw_{e_{j_1k_1}}
\end{pmatrix}
&
R_1 & = \begin{pmatrix}
D_{c_{j_1\bullet}} D_{c_{k_1\bullet}}\\
D_{c_{j_1\bullet}} D_{e_{k_1\bullet}}\\
D_{e_{j_1\bullet}} D_{c_{k_1\bullet}}\\
D_{e_{j_1\bullet}} D_{e_{k_1\bullet}}
\end{pmatrix}
\intertext{and}
\Dw_2 & = \begin{pmatrix}
\Dw_{k_{j_1k_1}}\Dw_{k_{j_2k_2}}\\
\Dw_{c_{j_1k_1}}\Dw_{c_{j_2k_2}}\\
\Dw_{c_{k_1j_1}}\Dw_{c_{k_2j_2}}\\
\Dw_{e_{j_1k_1}}\Dw_{e_{j_2k_2}}\\
\Dw_{k_{j_1k_1}}\Dw_{c_{j_2k_2}}\\
\Dw_{c_{j_1k_1}}\Dw_{c_{k_2j_2}}\\
\Dw_{c_{k_1j_1}}\Dw_{e_{j_2k_2}}\\
\Dw_{k_{j_1k_1}}\Dw_{c_{k_2j_2}}\\
\Dw_{c_{j_1k_1}}\Dw_{e_{j_2k_2}}\\
\Dw_{k_{j_1k_1}}\Dw_{e_{j_2k_2}}
\end{pmatrix}
& R_2 & =
\begin{pmatrix}
D_{c_{j_1\bullet}} D_{c_{j_2\bullet}} D_{c_{k_1\bullet}} D_{c_{k_2\bullet}}\\
D_{c_{j_1\bullet}} D_{c_{j_2\bullet}} D_{e_{k_1\bullet}} D_{e_{k_2\bullet}}\\
D_{e_{j_1\bullet}} D_{e_{j_2\bullet}} D_{c_{k_1\bullet}} D_{c_{k_2\bullet}}\\
D_{e_{j_1\bullet}} D_{e_{j_2\bullet}} D_{e_{k_1\bullet}} D_{e_{k_2\bullet}}\\
D_{c_{j_1\bullet}} D_{c_{j_2\bullet}} D_{c_{k_1\bullet}} D_{e_{k_2\bullet}}\\
D_{c_{j_1\bullet}} D_{e_{j_2\bullet}} D_{e_{k_1\bullet}} D_{c_{k_2\bullet}}\\
D_{e_{j_1\bullet}} D_{e_{j_2\bullet}} D_{c_{k_1\bullet}} D_{e_{k_2\bullet}}\\
D_{c_{j_1\bullet}} D_{e_{j_2\bullet}} D_{c_{k_1\bullet}} D_{c_{k_2\bullet}}\\
D_{c_{j_1\bullet}} D_{e_{j_2\bullet}} D_{e_{k_1\bullet}} D_{e_{k_2\bullet}}\\
D_{c_{j_1\bullet}} D_{e_{j_2\bullet}} D_{c_{k_1\bullet}} D_{e_{k_2\bullet}}
\end{pmatrix}
\end{align*}
\endgroup
with the understanding that $j_1,j_2 \in J$ and $k_1,k_2 \in K$.
The four identities \eqref{eq:id4} are now given, more succinctly, by
\[
\Dw_1 \rho_1(S_2 U^2) =
2 D_{c_{j \bullet}} R_1 \pbc_K(X_{\bullet})
\]
We now state all identities needed
for \eqref{eq:susu}.
For YM, with reference to Table \ref{table:ym}:
\begin{itemize}
\item If $\mathbf{O} = A_j$:
\begin{align*}
\rho_0(S_1 U) & = 
   - \tfrac{|I|}{|I|-1} \pbc_J(D_{c_{\bullet j}}) D_{e_{j\bullet}} \pbc_K(X_\bullet)  \displaybreak[0] \\
\Dw_1 \rho_1(S_1 U) & = 
   R_1 \pbc_J(A_j)
   + \tfrac{|K|-1}{|I|-1} \pbc_J(D_{c_{\bullet j}}) R_1 \pbc_J(X_j)
   \modp 
\end{align*}
\item If $\mathbf{O} = B_j$:
\begin{align*}
\rho_0(S_2 U) & = 
  \tfrac{(|J|-1)(|I|^2-|I|-|J|)}{|J|(|I|-1)(|I|-2)} D_{e_{j\bullet}} \pbc_K(X_\bullet)  \displaybreak[0] \\
\Dw_1 \rho_1(S_2 U) & = 
   -\tfrac{|K|(|K|-1)}{(|I|-1)(|I|-2)} R_1 \pbc_J(X_j)
   \modp
\intertext{and}
\rho_0(S_1 U) & = 
   D_{e_{j\bullet}} \pbc_K(B_\bullet)
   - \tfrac{1}{|J|(|I|-2)}
   \pbc_J\big(\\
   & \qquad (|I|-2)(|J|+1) D_{c_{j\bullet}} D_{c_{\bullet j}} \\
   & \qquad + |I| (|J|-1) D_{e_{j\bullet}} D_{k_{\bullet j}} \\
   & \qquad + (|I|-2) \textstyle\sum_{j' \in J} D_{c_{j'\bullet}} D_{c_{j'j}}
    \big) \pbc_K(X_\bullet)
    \modp  \displaybreak[0] \\
\Dw_1 \rho_1(S_1 U) & = 
    R_1 \pbc_J(B_j)
     + \tfrac{2(|K|-1)}{|I|-2} \pbc_J(D_{k_{\bullet j}}) R_1 \pbc_J(X_j)\\
     & \qquad
     + \tfrac{|K|-1}{|I|-1} \pbc_J(D_{c_{\bullet j}}) R_1 \pbc_J(Y_j)\\
     & \qquad
     + \tfrac{|K|-1}{|I|-1} \pbc_J(D_{c_{\bullet j}}) R_1 \pbc_J(Z)\\
     & \qquad
     - \tfrac{|J|}{|I|-1} \pbc_J(D_{c_{\bullet j}}) R_1 \pbc_K(Z)
    \modp
\end{align*}
\item If $\mathbf{O} = C_j$:
\begin{align*}
\rho_0(S_2 U) & = 
   D_{c_{j\bullet}} \pbc_K(X_\bullet)  \displaybreak[0] \\
\Dw_1 \rho_1(S_2 U) & = 
   -\tfrac{|K|(|K|-1)}{|I|(|I|-2)} R_1 \pbc_J(Z)
   +\tfrac{|K|(|J|-1)}{|I|(|I|-2)} R_1 \pbc_K(Z)
   \modp
\intertext{and}
\rho_0(S_1 U) & = 
  D_{c_{j\bullet}} \pbc_K(B_\bullet)
  -\tfrac{1}{|J|} \pbc_J\big(\\
    & \qquad
       (|J|+1) D_{c_{j\bullet}} D_{k_{\bullet j}}\\
    & \qquad + \textstyle\sum_{j' \in J} D_{c_{j'\bullet}} D_{k_{jj'}}\\
    & \qquad - \textstyle\sum_{j' \in J} D_{c_{j'\bullet}} D_{k_{\bullet j'}}
    \big) \pbc_K(X_\bullet)
  \modp   \displaybreak[0] \\
\Dw_1 \rho_1(S_1 U) & = 
  -\tfrac{|K|-1}{|I|} R_1 \pbc_J(C_\bullet)
  +R_1 \pbc_J(C_j)\\
  & \qquad
  +\tfrac{|K|-1}{|I|-2} \pbc_J(D_{k_{\bullet j}}) R_1 \pbc_J(Z)
  +\tfrac{|K|+1}{|I|} R_1 \pbc_K(C_\bullet)\\
  & \qquad
  -\tfrac{|J|-1}{|I|-2} \pbc_J(D_{k_{\bullet j}}) R_1 \pbc_K(Z)
  \modp
\end{align*}
\end{itemize}
For GR, with reference to Table \ref{table:gr}:
\begin{itemize}
\item If $\mathbf{O} = A_j$:
\begin{align*}
\rho_0(S_1 U^2) & =
   2 D_{e_{j\bullet}}^2 \pbc_K(A_{\bullet})  \displaybreak[0] \\
\Dw_1 \rho_1(S_1U^2) & = 
   -2 \tfrac{|I|}{|I|-1}  \pbc_J(D_{c_{\bullet j}}) D_{e_{j\bullet}} R_1 \pbc_K(X_{\bullet})  \displaybreak[0] \\
\Dw_2 \rho_2(S_1U^2) & = 
   2 R_2 \pbc_J(A_j) + 2 \tfrac{|K|-1}{|I|-1} \pbc_J(D_{c_{\bullet j}}) R_2 \pbc_J(X_j)
   \modp
\end{align*}
\item If $\mathbf{O} = B_j$:
\begin{align*}
\rho_0(S_2 U^2) & = 0  \displaybreak[0] \\
\Dw_1 \rho_1(S_2 U^2) & =
   2 \tfrac{(|J|-1)(|I|^2-|I|-|J|)}{|J|(|I|-1)(|I|-2)} D_{e_{j\bullet}} R_1 \pbc_K(X_\bullet)  \displaybreak[0] \\
\Dw_2 \rho_2(S_2 U^2) & = 
   -2 \tfrac{|K|(|K|-1)}{(|I|-1)(|I|-2)} R_2 \pbc_J(X_j)
   \modp
\intertext{and}
\rho_0(S_1 U^2) & =
   -\tfrac{2}{|K|} D_{e_{j\bullet}} \textstyle\sum_{k'\in K} (D_{c_{k'\bullet}})^2 \pbc_J(X_\bullet)
   + 4 D_{c_{j\bullet}} D_{e_{j\bullet}} \pbc_K(A_\bullet)  \displaybreak[0] \\
\Dw_1 \rho_1(S_1 U^2) & =
   2 D_{e_{j\bullet}} R_1 \pbc_K(B_\bullet)
   - \tfrac{2}{|J|(|I|-2)}
   \pbc_J\big(\\
   & \qquad (|I|-2)(|J|+1) D_{c_{j\bullet}} D_{c_{\bullet j}} \\
   & \qquad + |I| (|J|-1) D_{e_{j\bullet}} D_{k_{\bullet j}} \\
   & \qquad + (|I|-2) \textstyle\sum_{j' \in J} D_{c_{j'\bullet}} D_{c_{j'j}}
    \big) R_1 \pbc_K(X_\bullet)
   \modp  \displaybreak[0] \\
\Dw_2 \rho_2(S_1U^2) & = 
    2 R_2 \pbc_J(B_j) 
    + \tfrac{4(|K|-1)}{|I|-2} \pbc_J(D_{k_{\bullet j}})  R_2 \pbc_J(X_j)\\
  & \qquad + \tfrac{2(|K|-1)}{|I|-1} \pbc_J(D_{c_{\bullet j}}) R_2 \pbc_J(Y_j)\\
  & \qquad + \tfrac{2(|K|-1)}{|I|-1} \pbc_J(D_{c_{\bullet j}}) R_2 \pbc_J(Z)\\
  & \qquad - \tfrac{2|J|}{|I|-1} \pbc_J(D_{c_{\bullet j}}) R_2 \pbc_K(Z)
    \modp
\end{align*}
\item If $\mathbf{O} = C_j$:
\begin{align*}
\rho_0(S_2 U^2) & = 0  \displaybreak[0] \\
\Dw_1 \rho_1(S_2 U^2) & =
   2 D_{c_{j\bullet}} R_1 \pbc_K(X_\bullet)  \displaybreak[0] \\
\Dw_2 \rho_2(S_2 U^2) & = 
   -\tfrac{2|K|(|K|-1)}{|I|(|I|-2)} R_2 \pbc_J(Z)
   +\tfrac{2|K|(|J|-1)}{|I|(|I|-2)} R_2 \pbc_K(Z)
   \modp
\intertext{and}
\rho_0(S_1 U^2) & = 
   - \tfrac{2}{|I|} \textstyle\sum_{k'\in K} D_{c_{k'\bullet}}^2 \pbc_J(A_\bullet)
   - \tfrac{2}{|K|} D_{c_{j\bullet}} \textstyle\sum_{k'\in K} D_{c_{k'\bullet}}^2 \pbc_J(X_\bullet)\\
   & \qquad
    + 2(D_{c_{j\bullet}}^2 - \tfrac{1}{|I|} \textstyle\sum_{j' \in J} D_{c_{j'\bullet}}^2) \pbc_K(A_\bullet)  \displaybreak[0] \\
\Dw_1 \rho_1(S_1 U^2) & = 
   2 D_{j\bullet} R_1 \pbc_K(B_\bullet)
   - \tfrac{2}{|J|} \pbc_J\big(\\
    & \qquad
       (|J|+1) D_{c_{j\bullet}} D_{k_{\bullet j}}\\
    & \qquad + \textstyle\sum_{j' \in J} D_{c_{j'\bullet}} D_{k_{jj'}}\\
    & \qquad - \textstyle\sum_{j' \in J} D_{c_{j'\bullet}} D_{k_{\bullet j'}}
    \big) R_1 \pbc_K(X_\bullet)
   \modp \displaybreak[0] \\
\Dw_2 \rho_2(S_1 U^2) & = 
   -\tfrac{2(|K|-1)}{|I|} R_2 \pbc_J(C_\bullet)
   +2R_2 \pbc_J(C_j)\\
   & \qquad
   +\tfrac{2(|K|-1)}{|I|-2} \pbc_J(D_{k_{\bullet j}}) R_2 \pbc_J(Z)
   +\tfrac{2(|K|+1)}{|I|} R_2 \pbc_K(C_\bullet) \\
   & \qquad
   -\tfrac{2(|J|-1)}{|I|-2} \pbc_J(D_{k_{\bullet j}}) R_2 \pbc_K(Z)
    \modp
\end{align*}
\end{itemize}
These Weyl algebra identities are by direct calculation;
this is algorithmically straightforward,
best done using symbolic computation.
They imply \eqref{eq:susu} hence Lemma \ref{lemma:key}.
\qed\end{proof}

\begin{proof}[of Theorem \ref{theorem:main}]
The proof is by induction on $|I|$.
The base case $|I|=3$ is by direct calculation. As an example,
if $I = \{1,2,3\}$ then 
\[
      \widetilde{C}_1 M_I = \widetilde{C}_2 M_I = \widetilde{C}_3 M_I = 
\text{const}\cdot e_{12}e_{23}e_{31}
\] 
It follows that $C_i M_I = 0$ for $i=1,2,3$ as required.
Let now $|I| \geq 4$. Let $f = A_I$ (YM) or $f = M_I$ (GR).
For every $I = J \sqcup K$ with $|J|, |K| \geq 2$ write
\[
f = (f-g) + g
\qquad
g = \begin{cases}
\text{const}\cdot
 U \frac{\down^\ast_J(A_\JB) \down^\ast_K(A_\KB)}{\xi} & \text{for YM}\\
\text{const}\cdot
 U^2 \frac{\down^\ast_J(M_\JB) \down^\ast_K(M_\KB)}{\xi} & \text{for GR}
\end{cases}
\]
Note that:
\begin{itemize}
\item By the  recursion in
Definition \ref{def:rec},
the difference $f-g$ does not have a pole along $\xi = 0$
and therefore neither do $A_i(f-g)$, $B_i(f-g)$, $C_i(f-g)$.
\item By the induction hypothesis,
we can invoke Lemma \ref{lemma:key} and conclude that 
also $A_ig$, $B_ig$, $C_ig$ do not have a pole along $\xi = 0$.
\end{itemize}
Hence
$A_i f$, $B_i f$, $C_i f$
have no pole along $\xi = 0$ for every decomposition $I = J \sqcup K$,
and therefore (by Hartogs extension as in the proof of Lemma \ref{lemma:uniq})
we have
\[
A_i f,\; B_if,\; C_i f \in R_{C(I)}
\]
where $R_{C(I)}$ is the ring of polynomials.
To show that they are actually zero:
\begin{itemize}
\item Use \eqref{eq:acom} and Lemma \ref{lemma:vanish} (with $u = A_if$)
      to conclude that $A_i f = 0$.
\item Then use \eqref{eq:bcom} and Lemma \ref{lemma:vanish} (with $u = B_i f$)
      to conclude that $B_i f = 0$.
\item Then use \eqref{eq:ccom} and Lemma \ref{lemma:vanish} (with $u = C_i f$)
      to conclude that $C_i f = 0$.
\end{itemize}
\qed\end{proof}

\section{Acknowledgments}
M.R.~is grateful to have received funding from ERC through grant agreement No.~669655.

\appendix
\section{Some tree amplitudes}\label{app:formulas}

The expressions below are rational functions
on the vector space $C(I)$
in Definition \ref{def:cvecs},
defined using the relations \eqref{eq:linrel}.
Not all symmetries are immediate from the formulas, for example
the permutation invariance of the GR amplitudes is not.
The expressions are up to an overall multiplicative constant.
\begin{itemize}
\item The YM amplitude $A_I$ for $I = \{1,2,3\}$: $-c_{23} e_{12}+c_{21} e_{23}-c_{12} e_{31}$.
\item The GR amplitude $M_I$ for $I = \{1,2,3\}$: $(-c_{23} e_{12}+c_{21} e_{23}-c_{12} e_{31})^2$.
\item The YM amplitude $A_I$ for $I = \{1,2,3,4\}$, with canonical cyclic order:
{\footnotesize
\begin{multline*}
-\frac{c_{14} c_{23} e_{12}}{k_{12}}+\frac{c_{13} c_{24} e_{12}}{k_{12}}
-\frac{c_{14} c_{23} e_{12}}{k_{23}}-\frac{c_{12} c_{14} e_{13}}{k_{12}}  
-\frac{c_{12} c_{24} e_{13}}{k_{12}}+\frac{c_{12} c_{13} e_{14}}{k_{12}}
+\frac{c_{12} c_{23} e_{14}}{k_{12}}+\frac{c_{14} c_{21} e_{23}}{k_{12}}  \displaybreak[0] \\
+\frac{c_{21} c_{24} e_{23}}{k_{12}}-\frac{c_{13} c_{21} e_{24}}{k_{12}}
-\frac{c_{21} c_{23} e_{24}}{k_{12}}-\frac{c_{12} c_{31} e_{34}}{k_{12}} 
+\frac{c_{21} c_{32} e_{34}}{k_{12}}+\frac{c_{14} c_{32} e_{13}}{k_{23}}
+\frac{c_{12} c_{23} e_{14}}{k_{23}}-\frac{c_{13} c_{32} e_{14}}{k_{23}}  \displaybreak[0] \\
+\frac{c_{14} c_{21} e_{23}}{k_{23}}+\frac{c_{21} c_{24} e_{23}}{k_{23}}
+\frac{c_{24} c_{31} e_{23}}{k_{23}}-\frac{c_{21} c_{23} e_{24}}{k_{23}} 
-\frac{c_{23} c_{31} e_{24}}{k_{23}}+\frac{c_{21} c_{32} e_{34}}{k_{23}}
+\frac{c_{31} c_{32} e_{34}}{k_{23}}-\frac{e_{34} e_{12} k_{23}}{k_{12}}\\
-\frac{e_{14} e_{23} k_{12}}{k_{23}}-e_{34} e_{12}-e_{14} e_{23}+e_{13} e_{24}
\end{multline*}}
\item The GR amplitude $M_I$ for $I = \{1,2,3,4\}$, as an unordered set:
{\footnotesize\begin{multline*}
k_{32}
\Big(
-e_{34} e_{12}-\frac{e_{34} k_{23} e_{12}}{k_{12}}-\frac{c_{14} c_{23} e_{12}}{k_{12}}
+\frac{c_{13} c_{24} e_{12}}{k_{12}}-\frac{c_{14} c_{23} e_{12}}{k_{23}}-e_{14} e_{23}
+e_{13} e_{24} \displaybreak[0] \\
-\frac{c_{12} c_{14} e_{13}}{k_{12}}-\frac{c_{12} c_{24} e_{13}}{k_{12}}
+\frac{c_{12} c_{13} e_{14}}{k_{12}}+\frac{c_{12} c_{23} e_{14}}{k_{12}} 
+\frac{c_{14} c_{21} e_{23}}{k_{12}}+\frac{c_{21} c_{24} e_{23}}{k_{12}}
-\frac{c_{13} c_{21} e_{24}}{k_{12}}-\frac{c_{21} c_{23} e_{24}}{k_{12}}  \displaybreak[0] \\
-\frac{c_{12} c_{31} e_{34}}{k_{12}}+\frac{c_{21} c_{32} e_{34}}{k_{12}}
+\frac{c_{14} c_{32} e_{13}}{k_{23}}+\frac{c_{12} c_{23} e_{14}}{k_{23}} 
-\frac{c_{13} c_{32} e_{14}}{k_{23}}+\frac{c_{14} c_{21} e_{23}}{k_{23}}
+\frac{c_{21} c_{24} e_{23}}{k_{23}}+\frac{c_{24} c_{31} e_{23}}{k_{23}}  \displaybreak[0] \\
-\frac{c_{21} c_{23} e_{24}}{k_{23}}-\frac{c_{23} c_{31} e_{24}}{k_{23}}
+\frac{c_{21} c_{32} e_{34}}{k_{23}}+\frac{c_{31} c_{32} e_{34}}{k_{23}}
-\frac{e_{14} e_{23} k_{12}}{k_{23}}\Big)\cdot   \displaybreak[0] \\
\cdot \Big(
e_{34} e_{12}-\frac{c_{13} c_{14} e_{12}}{k_{13}}-\frac{c_{13} c_{34} e_{12}}{k_{13}}
+\frac{c_{14} c_{23} e_{12}}{k_{32}}-e_{13} e_{24}-e_{14} e_{32}  \displaybreak[0] \\ 
-\frac{e_{13} e_{24} k_{32}}{k_{13}}-\frac{c_{14} c_{32} e_{13}}{k_{13}}
+\frac{c_{12} c_{34} e_{13}}{k_{13}}+\frac{c_{12} c_{13} e_{14}}{k_{13}} 
+\frac{c_{13} c_{32} e_{14}}{k_{13}}-\frac{c_{13} c_{21} e_{24}}{k_{13}}
+\frac{c_{23} c_{31} e_{24}}{k_{13}}+\frac{c_{14} c_{31} e_{32}}{k_{13}}  \displaybreak[0] \\
+\frac{c_{31} c_{34} e_{32}}{k_{13}}-\frac{c_{12} c_{31} e_{34}}{k_{13}}
-\frac{c_{31} c_{32} e_{34}}{k_{13}}-\frac{c_{14} c_{32} e_{13}}{k_{32}}  
-\frac{c_{12} c_{23} e_{14}}{k_{32}}+\frac{c_{13} c_{32} e_{14}}{k_{32}}
+\frac{c_{21} c_{23} e_{24}}{k_{32}}+\frac{c_{23} c_{31} e_{24}}{k_{32}}  \displaybreak[0] \\
+\frac{c_{14} c_{31} e_{32}}{k_{32}}+\frac{c_{21} c_{34} e_{32}}{k_{32}}
+\frac{c_{31} c_{34} e_{32}}{k_{32}}-\frac{c_{21} c_{32} e_{34}}{k_{32}}  
-\frac{c_{31} c_{32} e_{34}}{k_{32}}-\frac{e_{14} e_{32} k_{13}}{k_{32}}
\Big)
\end{multline*}}
\end{itemize}
The dimension-neutral YM amplitudes are discussed e.g.~in \cite{bbdf},
explicit expressions are at
{\footnotesize \url{www.damtp.cam.ac.uk/user/crm66/SYM/pss.html}}.
To get the dimension-neutral GR amplitudes one can use the
Kawai-Lewellen-Tye or KLT relations, for a review see \cite{sonder}.


\section{Mathematica computer code}\label{app:code}

Here we provide definitions from Theorem \ref{theorem:main}
and Lemma \ref{lemma:dops} as Wolfram Mathematica computer code.
The code is rudimentary and not practical except for very low $n$.
\begin{verbatim}
(* Set number of legs, n>=3 *)
n=3;

(* Linear relations *)
relations=Flatten[{Table[{k[i,i],c[i,i],e[i,i]},{i,1,n}],
                   Table[{k[i,j]-k[j,i],e[i,j]-e[j,i]},{i,1,n},{j,1,n}],
                   Table[Sum[{k[i,j],c[i,j]},{i,1,n}],{j,1,n}]}];
normalForm=First[Solve[Thread[relations==0]]];

(* Definition of D operators *)
z=KroneckerDelta;
Dk[i_,j_][f_]:=(1-z[i,j])*Sum[(1-z[a,b])*(z[i,a]*z[j,b]+z[i,b]*z[j,a]
    -1/(n-2)*(z[i,a]+z[i,b]+z[j,a]+z[j,b])+2/((n-1)*(n-2)))*D[f,k[a,b]],{a,1,n},{b,1,n}];
Dc[i_,j_][f_]:=(1-z[i,j])*Sum[(1-z[a,j])*(z[i,a]-1/(n-1))*D[f,c[a,j]],{a,1,n}];
De[i_,j_][f_]:=(1-z[i,j])*(D[f,e[i,j]]+D[f,e[j,i]]);

(* Definition of A,B,C operators *)
AOp[i_][f_]:=Sum[1/2*k[p,q]*Dc[p,i][Dc[q,i][f]]+c[p,q]*Dc[p,i][De[q,i][f]]
    +1/2*e[p,q]*De[p,i][De[q,i][f]],{p,1,n},{q,1,n}];
BOp[i_][f_]:=Sum[c[p,q]*Dk[i,p][De[i,q][f]]+e[p,q]*Dc[i,p][De[i,q][f]]
    +k[p,q]*Dk[i,p][Dc[q,i][f]]+c[q,p]*Dc[i,p][Dc[q,i][f]]
    -c[q,p]*Dk[p,q][De[p,i][f]]-e[p,q]*Dc[p,q][De[p,i][f]]
    -k[p,q]*Dk[p,q][Dc[p,i][f]]-c[p,q]*Dc[p,i][Dc[p,q][f]],{p,1,n},{q,1,n}];
COpTilde[i_][f_]:=Sum[1/2*e[p,q]*Dc[i,p][Dc[i,q][f]]+c[q,p]*Dk[i,q][Dc[i,p][f]]
    +1/2*k[p,q]*Dk[i,p][Dk[i,q][f]]-e[p,q]*Dc[i,p][Dc[p,q][f]]
    -c[q,p]*Dk[p,q][Dc[i,p][f]]-c[p,q]*Dk[p,i][Dc[p,q][f]]
    -k[p,q]*Dk[p,i][Dk[p,q][f]],{p,1,n},{q,1,n}];
COp[i_][f_]:=COpTilde[i][f]-1/n*Sum[COpTilde[p][f],{p,1,n}];

(* Example: C_1 annihilates the GR amplitude with n=3 *)
M3=(-c[2,3]*e[1,2]+c[2,1]*e[2,3]-c[1,2]*e[3,1])^2;
COp[1][M3] /. normalForm // Simplify (* yields 0 *)
\end{verbatim}

\section{Additional formulas}\label{app:kref}

This is an extension of Section \ref{sec:kinematic}
that provides definitions and formulas that are useful in the proof
of the technical Lemma \ref{lemma:key}.
We extend \eqref{mdiag} to four linear maps:
\[
\begin{tikzcd}[column sep = huge]
& M \arrow[dl,shift right = 1.5, swap,"\down_J"]
    \arrow[dr,shift left = 1.5, "\down_K"] & \\
C(\JB)
\arrow[ur,shift right = 1.5, swap, "\up_J"]
& &
\arrow[ul, shift left = 1.5, "\up_K"]
C(\KB)
\end{tikzcd}
\]
The maps $\up_J$ and $\up_K$ are explicit right-inverses
of $\down_J$ and $\down_K$, 
so $\down_J \up_J = \mathbbm{1}$ and $\down_K \up_K = \mathbbm{1}$.
Hence $\up_J \down_J$ and $\up_K \down_K$ are projections $M \to M$.
The direction $\Xi$ is in the kernel of these projections.
We choose $\up_J$ and $\up_K$ so that their
images are contained in $\Mp$ and so that
$\down_K \up_J = 0$ and $\down_J \up_K = 0$.
The result of this is a useful decomposition
\begin{equation}\label{eq:decomp}
M = \C \Xi \oplus \underbrace{\image \pi \oplus
\underbrace{\image \up_J}_{\simeq C(\JB)}
\oplus
\underbrace{\image \up_K}_{\simeq C(\KB)}}_{= \Mp}
\end{equation}
where $\pi = \mathbbm{1}_{\Mp} - \up_J \down_J - \up_K \down_K$
is a projection, $\pi^2 = \pi$,
with $\down_J \pi = \down_K \pi = 0$.
\begin{lemma} \label{lemma:pushf}
The map $\down_J$ (see Lemma \ref{lemma:pb}) satisfies,
and is equivalently defined by,
\begin{align*}
\Dp_{k_{jj'}}
& \mapsto D_{k_{jj'}}
- \textstyle\frac{(|K|-1)(1-\delta_{jj'})}{|I|-2} (D_{k_{j\bullet}} + D_{k_{j' \bullet}}) &
\Dp_{e_{jj'}}
& \mapsto D_{e_{jj'}}\\
\Dp_{k_{jk}}, \Dp_{k_{kj}}
& \mapsto \textstyle\frac{|J|-1}{|I|-2} D_{k_{j\bullet}} &
\Dp_{e_{jk}}, \Dp_{e_{kj}}
& \mapsto 0\\
\Dp_{k_{kk'}} & \mapsto 0 &
\Dp_{e_{kk'}} & \mapsto 0\\
\Dp_{c_{jj'}}
& \mapsto D_{c_{jj'}}
- \textstyle\frac{(|K|-1)(1-\delta_{jj'})}{|I|-1} D_{c_{\bullet j'}} &
D_{c_{j\bullet}} & \mapsto D_{c_{j\bullet}} \displaybreak[0]\\
\Dp_{c_{jk}} & \mapsto 0 &
D_{e_{j\bullet}} & \mapsto D_{e_{j\bullet}}\\
\Dp_{c_{kj}} & \mapsto \textstyle\frac{|J|}{|I|-1} D_{c_{\bullet j}} &
D_{c_{k\bullet}} & \mapsto 0\\
\Dp_{c_{kk'}} & \mapsto 0 &
D_{e_{k\bullet}} & \mapsto 0\\
&&\Xi & \mapsto 0
\end{align*}
for all $j,j' \in J$ and $k,k' \in K$.
The elements on the left are in $M \subset D_M$.
Those on the right are in $C(\JB) \subset D_{C(\JB)}$,
using Lemma \ref{lemma:dops} for $\JB$.
Analogous for $\down_K$.
\end{lemma}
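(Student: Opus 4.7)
The plan is to verify each row of the table by the defining adjoint relation for a linear map. Concretely, an element $w \in C(\JB) \subset D_{C(\JB)}$ equals the pushforward $\down_J(v)$ of some $v \in M \subset D_M$ if and only if
\[
[w, f] \;=\; [v, \down_J^{\ast}(f)]
\]
for every coordinate $f \in C(\JB)^\ast$, where $\down_J^{\ast}$ is read off Lemma \ref{lemma:pb} and the commutators are computed in the respective Weyl algebras using Lemma \ref{lemma:dops} (for $C(\JB)$ on the left, and for $C(I)$ and $E$ on the right). This reduces the lemma to a finite case analysis: fix a row of the table, fix a coordinate $f \in \{k_{ab}, c_{ab}, e_{ab}\}_{a,b \in \JB}$, and check that the two bracket computations agree.

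To illustrate, for the row $\Dp_{k_{jj'}}$ the bracket with any $c$- or $e$-coordinate vanishes trivially on both sides (all mixed commutators are zero), so only $f = k_{ab}$ needs checking, and the only nontrivial subcases are those with $a, b \in \{j, j', \bullet\}$. For $f = k_{j\bullet}$ one uses $\down_J^\ast(k_{j\bullet}) = -\sum_{j'' \in J} \kp_{j''j}$, together with Lemma \ref{lemma:dpx} (which gives $[\Dp_{k_{jj'}}, \xi] = 0$, so $\kp$ may be replaced by $k$ in the commutator), and matches the resulting sum of $[D_{k_{jj'}}, k_{j''j}]$'s against $[D_{k_{jj'}} - \tfrac{|K|-1}{|I|-2}(D_{k_{j\bullet}}+D_{k_{j'\bullet}}), k_{j\bullet}]$ as computed in $D_{C(\JB)}$ with $|\JB| = |J|+1$ in place of $|I|$. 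The coefficient $\tfrac{|K|-1}{|I|-2}$ is fixed precisely by this identity. The other rows are analogous; the $\Xi \mapsto 0$ row is immediate because $\Xi$ commutes with every $\kp, \cp, \ep$ and with every $c_{j\bullet}, e_{j\bullet}$ (Lemma \ref{lemma:dpx}), while $\down_J^\ast$ produces only such coordinates.

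The main obstacle is purely combinatorial: the normalization prefactors $\tfrac{1}{|I|-1}$ and $\tfrac{1}{|I|-2}$ appearing in Lemma \ref{lemma:dops} are different on the source (with $|I|$) and on the target (with $|\JB| = |J|+1$). Reconciling them is precisely what produces the rational factors $\tfrac{|J|-1}{|I|-2}$, $\tfrac{|K|-1}{|I|-2}$, and $\tfrac{|J|}{|I|-1}$ in the stated formulas; for instance the $\tfrac{|J|}{|I|-1}$ in $\Dp_{c_{kj}} \mapsto \tfrac{|J|}{|I|-1} D_{c_{\bullet j}}$ arises from pairing both sides with $c_{\bullet j}$, whose pullback is $-\sum_{j' \in J} \cp_{j'j}$, and comparing the $\tfrac{1}{|I|-1}$ and $\tfrac{1}{|J|}$ terms in the respective $D_c$-commutators. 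No conceptual difficulty is expected: the verification is algorithmic and can be carried out by symbolic computation, in the same spirit as the Weyl-algebra identities assembled at the end of the proof of Lemma \ref{lemma:key}.
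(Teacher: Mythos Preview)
Your proposal is correct and follows essentially the same approach as the paper: both reduce the claim to checking the defining adjoint relation $[w,f] = [v,\down_J^\ast f]$ for each row against each coordinate $f \in C(\JB)^\ast$, using the explicit pullback in Lemma~\ref{lemma:pb} and the commutator formulas in Lemma~\ref{lemma:dops}. The paper illustrates with the pair $(\Dp_{k_{jk}}, k_{ab})$ while you illustrate with $(\Dp_{k_{jj'}}, k_{j\bullet})$, but the method and its justification are identical.
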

\begin{proof}
Let $X \mapsto Y$ be any of these claimed assignments.
We must check that $Y(y) = X(\down_J^\ast y)$ for all $y \in C(\JB)^\ast$
where here $Y(y)$ means applying $Y$ to $y$ as a function\footnote{%
With respect to bases and their dual bases, the matrix for $\down_J$
is the transpose of the matrix for $\down_J^\ast$ defined in Lemma \ref{lemma:pb},
but we do not work with bases.}.
Here $\down_J^\ast$ is given by Lemma \ref{lemma:pb}. 
Consider for example the case $X = \Dp_{k_{jk}}$ and $y = k_{ab} \in C(\JB)^\ast$ with $a,b \in J$.
Then
\begin{multline*}
X(\down_J^\ast(y)) = \Dp_{k_{jk}}(\kp_{ab})
= \Dp_{k_{jk}}(k_{ab})
= D_{k_{jk}}(k_{ab}) + \textstyle\frac{(|K|-1)(|J|-1)}{(|I|-1)(|I|-2)} \Xi(k_{ab})\\
                   = \textstyle(1-\delta_{ab}) (-\frac{1}{|I|-2}(\delta_{ja} + \delta_{jb})
                       + \frac{2}{(|I|-1)(|I|-2)})
+ \textstyle\frac{(|K|-1)(|J|-1)}{(|I|-1)(|I|-2)} \frac{2(1-\delta_{ab})}{|J|(|J|-1)}
\end{multline*}
using Lemmas \ref{lemma:dops} and \ref{lemma:dpx}. Similarly,
\[
Y(y) = \textstyle\frac{|J|-1}{|I|-2} D_{k_{j\bullet}}(k_{ab})
= \textstyle\frac{|J|-1}{|I|-2}(1-\delta_{ab})(
     -\frac{1}{|J|-1}(\delta_{ja} + \delta_{jb}) + \frac{2}{|J|(|J|-1)})
\]
using Lemma \ref{lemma:dops} for $\JB$. 
In this example we see that
$X(\down_J^\ast(y)) = Y(y)$.
\qed\end{proof}
\begin{lemma}[Right-inverse] \label{lemma:rinvp}
The map $\down_J$ is surjective. There exists a unique right inverse
$\up_J: C(\JB) \to M$ whose image is the subspace spanned by:
\begin{itemize}
\item All $\Dp_{k_{jj'}}$, $\Dp_{c_{jj'}}$, $\Dp_{e_{jj'}}$ with $j,j' \in J$.
\item All $D_{c_{j\bullet}}$, $D_{e_{j\bullet}}$ with $j \in J$.
\end{itemize}
This right-inverse $\up_J$ maps
\begin{align*}
D_{k_{jj'}} & \mapsto 
\Dp_{k_{jj'}}
-
\textstyle\frac{(|K|-1)(1-\delta_{jj'})}{|K|(|J|-1)}
\sum_{j'' \in J} (\Dp_{k_{jj''}} + \Dp_{k_{j'j''}})
\\
D_{k_{j\bullet}}, D_{k_{\bullet j}} & \mapsto
- \textstyle\frac{|I|-2}{|K|(|J|-1)}\sum_{j'\in J}\Dp_{k_{jj'}}\\
D_{c_{jj'}} & \mapsto 
\Dp_{c_{jj'}}
-
 \textstyle\frac{(|K|-1)(1-\delta_{jj'})}{|J||K|} \sum_{j'' \in J} \Dp_{c_{j''j'}}
\displaybreak[0]\\
D_{c_{\bullet j}} & \mapsto - \textstyle\frac{|I|-1}{|J||K|}\sum_{j'\in J} \Dp_{c_{j'j}}\\
D_{c_{j\bullet}} & \mapsto D_{c_{j\bullet}} \\ 
D_{e_{jj'}} & \mapsto \Dp_{e_{jj'}}\\
D_{e_{j\bullet}}, D_{e_{\bullet j}} & \mapsto D_{e_{j\bullet}}
\end{align*}
for all $j,j' \in J$.
We have $\down_K \up_J = 0$. Analogous for $\up_K$ with $\down_J \up_K = 0$.
\end{lemma}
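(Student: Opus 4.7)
The plan is a direct verification, relying on the explicit formulas that define $\up_J$ together with the formulas for $\down_J$ from Lemma \ref{lemma:pushf}. I would proceed in three steps: (i) check that the displayed assignments descend to a well-defined linear map $C(\JB) \to M$, (ii) verify $\down_J \up_J = \mathbbm{1}$ by composition, and (iii) match dimensions to obtain uniqueness, surjectivity, and the vanishing $\down_K \up_J = 0$.

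For (i), the elements $D_{k_{jj'}}, D_{k_{j\bullet}}, D_{c_{jj'}}, D_{c_{\bullet j}}, D_{c_{j\bullet}}, D_{e_{jj'}}, D_{e_{j\bullet}}$ with $j, j' \in J$ span $C(\JB) \subset D_{C(\JB)}$ modulo the relations \eqref{eq:dlinrel} applied to the index set $\JB$, and each such relation must be respected by the assignments. The only nontrivial checks are the $k$-sum $\sum_{i \in \JB} D_{k_{ij}} \mapsto 0$, which follows from \eqref{eq:kperp} after inserting the formula for $D_{k_{\bullet j}}$, and the $c$-sum $\sum_{i \in \JB} D_{c_{ij}} \mapsto 0$, where the $\Dp_{c_{j'j}}$ contributions coming from the $D_{c_{j'j}}$ assignments combine with the one from $D_{c_{\bullet j}}$ so that the coefficient of every $\Dp_{c_{j'j}}$ vanishes.

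For (ii), I would apply $\down_J$ termwise via Lemma \ref{lemma:pushf} to each displayed formula. For instance, for $D_{k_{jj'}}$ with $j \neq j'$ one computes
\[
\down_J\bigl(\Dp_{k_{jj'}}\bigr) - \tfrac{|K|-1}{|K|(|J|-1)} \sum_{j'' \in J} \down_J\bigl(\Dp_{k_{jj''}} + \Dp_{k_{j'j''}}\bigr),
\]
and the $D_{k_{j\bullet}}, D_{k_{j'\bullet}}$ contributions produced by Lemma \ref{lemma:pushf} cancel against the subtracted sum, leaving exactly $D_{k_{jj'}}$. The other six cases are analogous, with $D_{c_{j\bullet}}, D_{e_{jj'}}, D_{e_{j\bullet}}$ essentially trivial since $\up_J$ fixes them and $\down_J$ recognizes them.

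For (iii), counting the listed spanning elements of the image modulo their relations gives
\[
\bigl(\tbinom{|J|}{2} - 1\bigr) + |J|(|J|-1) + \tbinom{|J|}{2} + (|J|-1) + |J| \;=\; 2|J|^2 - 2 \;=\; \dim C(\JB),
\]
where the first subtraction uses \eqref{eq:kperp} and the penultimate uses $\sum_{j \in J} D_{c_{j\bullet}} = 0$ in $E$. Combined with (ii), which makes $\up_J$ injective, the image coincides with the prescribed subspace, $\up_J$ is the unique right inverse with that image, and $\down_J$ is surjective. Finally $\down_K \up_J = 0$ is immediate since each spanning element of the image is annihilated by the $J \leftrightarrow K$ analog of Lemma \ref{lemma:pushf}. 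The main obstacle is not conceptual but just the bookkeeping in step (ii) — seven cases with rational coefficients in $|J|, |K|, |I|$ that must telescope exactly — and it is most reliably verified by symbolic computation.
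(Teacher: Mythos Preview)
Your proposal is correct and follows essentially the same route as the paper's own proof: check that the assignments respect the relations \eqref{eq:dlinrel} on $C(\JB)$, verify $\down_J \up_J = \mathbbm{1}$ by direct composition with the formulas of Lemma \ref{lemma:pushf}, and then read off surjectivity, injectivity, the image, and uniqueness. Your explicit dimension count in step (iii) and your remark that $\down_K \up_J = 0$ follows from the $J\leftrightarrow K$ version of Lemma \ref{lemma:pushf} are accurate elaborations of points the paper leaves terse.
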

\begin{proof}
First note that these assignments define a map $\up_J$
because we have specified it on a set that spans $C(\JB)$
and it respects all relations,
for instance $\sum_{j' \in J} D_{c_{j'j}} + D_{c_{\bullet j}}$
is zero in $C(\JB)$ and must be mapped to zero, which it is.
Now $\down_J \up_J = \mathbbm{1}$
is by direct calculation using Lemma \ref{lemma:pushf},
in particular $\down_J$ is surjective and $\up_J$ is injective.
The image of $\up_J$ is as claimed.
The image determines the right-inverse, so it is unique as claimed.
\qed\end{proof}
\begin{corollary}\label{corollary:decomp}
The internal direct sum decomposition \eqref{eq:decomp} holds. Also:
\begin{itemize}
\item The extension space $E \subset \Mp$ is contained in
$\image \up_J \oplus \image \up_K$.
\item The subspace $\image \pi \subset \Mp$ is the span of all
\begin{align*}
\Dw_{k_{jk}} & = \Dp_{k_{jk}} - \textstyle\frac{1}{|J|}\sum_{j' \in J} \Dp_{k_{j'k}}\\
             & \qquad\qquad
               - \textstyle\frac{1}{|K|}\sum_{k' \in K} \Dp_{k_{jk'}}
               + \textstyle\frac{1}{|J||K|}\sum_{j' \in J, k' \in K} \Dp_{k_{j'k'}}\\
\Dw_{c_{jk}} & = \Dp_{c_{jk}} - \textstyle\frac{1}{|J|}\sum_{j' \in J} \Dp_{c_{j'k}}\\
\Dw_{c_{kj}} & = \Dp_{c_{kj}} - \textstyle\frac{1}{|K|}\sum_{k' \in K} \Dp_{c_{k'j}}\\
\Dw_{e_{jk}} & = \Dp_{e_{jk}}
\end{align*}
with $j \in J$, $k \in K$. Its dimension is $(2|J|-1)(2|K|-1)$.
In $D_M$ its elements commute with all elements in the images of
$\down_J^\ast$, $\down_K^\ast$ in particular
\[
\kp_{jj'}, \cp_{jj'}, \ep_{jj'},
\kp_{kk'}, \cp_{kk'}, \ep_{kk'},
c_{j\bullet}, e_{j\bullet},
c_{k\bullet}, e_{k\bullet}
\;\;\;
\in\;\;\; M^\ast
\]
for all $j,j' \in J$ and $k,k' \in K$.
\end{itemize}
\end{corollary}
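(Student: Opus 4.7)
The plan breaks into three parts matching the three claims of the corollary, and the key structural idea is to turn the relations $\down_J \up_J = \one$, $\down_K \up_K = \one$, $\down_J \up_K = 0$, $\down_K \up_J = 0$ from Lemma \ref{lemma:rinvp} into an orthogonal triple of idempotents on $\Mp$. Setting $P_J = \up_J \down_J$, $P_K = \up_K \down_K$, and $\pi = \one_{\Mp} - P_J - P_K$, these four relations immediately make $P_J, P_K, \pi$ idempotent and pairwise annihilating (for instance $P_J P_K = \up_J(\down_J \up_K)\down_K = 0$), with sum $\one_{\Mp}$. Everything else in the corollary will follow from this picture.

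First I would use orthogonality to get $\Mp = \image \pi \oplus \image P_J \oplus \image P_K$; since $\up_J$ is injective with $P_J \up_J = \up_J$, one has $\image P_J = \image \up_J$ and similarly $\image P_K = \image \up_K$. Combined with $M = \C \Xi \oplus \Mp$ from Lemma \ref{lemma:dpx}, this is exactly \eqref{eq:decomp}. The inclusion $E \subset \image \up_J \oplus \image \up_K$ is then immediate from Lemma \ref{lemma:rinvp}: the generators $D_{c_{j\bullet}}, D_{e_{j\bullet}}$ of the $J$-part of $E$ lie in $\image \up_J$ (they are mapped to themselves), and the $K$-analogue covers the rest.

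The substantive step is the description of $\image \pi$. I would begin by establishing the useful characterization $\image \pi = \ker \down_J \cap \ker \down_K$: the inclusion $\subset$ follows from $\down_J \pi = \down_J - \down_J P_J - \down_J P_K = \down_J - \down_J - 0 = 0$ and similarly for $K$; the reverse inclusion is that $\down_J(D) = \down_K(D) = 0$ forces $P_J(D) = P_K(D) = 0$ hence $\pi(D) = D$. For each of the four $\Dw$ families I would then verify directly by pushforward with Lemma \ref{lemma:pushf} and its $K$-analogue that both $\down_J(\Dw) = 0$ and $\down_K(\Dw) = 0$; the computation for $\Dw_{k_{jk}}$ is the most delicate step of the proof and is therefore the main obstacle, while the three $c$- and $e$-families simplify quickly because $\down_J$ kills the various $\Dp_{k_{kk'}}, \Dp_{c_{kj}}, \Dp_{c_{kk'}}, \Dp_{e_{kj}}, \Dp_{e_{kk'}}$ outright. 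The spanning claim then reduces to a dimension count: the $4|J||K|$ candidates satisfy exactly $2|J|+2|K|-1$ independent linear relations ($|J|+|K|-1$ from $\sum_j \Dw_{k_{jk}} = \sum_k \Dw_{k_{jk}} = 0$ modulo their common total sum, $|K|$ from $\sum_j \Dw_{c_{jk}} = 0$, $|J|$ from $\sum_k \Dw_{c_{kj}} = 0$, and none for the $e$-case), leaving $(2|J|-1)(2|K|-1)$; independently, $\dim \image \pi = \dim \Mp - \dim C(\JB) - \dim C(\KB) = (2n^2 - 2n - 3) - (2|J|^2 - 2) - (2|K|^2 - 2) = (2|J|-1)(2|K|-1)$ using $n=|J|+|K|$, so the $\Dw$ elements indeed span.

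Finally, the commutation claim is pure duality. In the Weyl algebra, $[D,f] = \langle D, f \rangle$ for $D \in M$ and $f \in M^\ast$, and $\langle D, \down_J^\ast(g) \rangle = \langle \down_J(D), g \rangle$. For $D \in \image \pi \subset \ker \down_J$ this pairing vanishes identically in $g$, so $D$ commutes with every element of $\image \down_J^\ast$; by Lemma \ref{lemma:pb} this image contains $\kp_{jj'}, \cp_{jj'}, \ep_{jj'}, c_{j\bullet}, e_{j\bullet}$, and the analogous argument for $\down_K$ covers the $K$-flavored listed functions.
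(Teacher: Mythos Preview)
Your proposal is correct and follows the natural route that the paper's setup invites: the paper states Corollary~\ref{corollary:decomp} without proof, treating it as an immediate consequence of Lemmas~\ref{lemma:dpx}, \ref{lemma:pb}, \ref{lemma:pushf}, and \ref{lemma:rinvp}, and your argument spells out precisely that deduction. The idempotent decomposition from $\down_J\up_J=\one$, $\down_K\up_K=\one$, $\down_J\up_K=0$, $\down_K\up_J=0$, the characterization $\image\pi = \ker\down_J\cap\ker\down_K$ within $\Mp$, the pushforward checks via Lemma~\ref{lemma:pushf}, the dimension count, and the duality argument for the commutators are all sound. One small point worth tightening: your relation count asserts the listed relations are \emph{exactly} all relations among the $\Dw$, which is what is needed for the dimension match to force spanning; this is routine (the four families live in independent sectors of $M$ and within each sector the relations are visibly the only ones), but should be stated rather than assumed.
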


\newcommand{\nonlocop}{\tc{hcolor}{Y}}
\newcommand{\locop}{\nonlocop_{\text{loc}}}
\newcommand{\proj}{{\tc{qcolor}{\pi}}}
\section{Relation to Loebbert, Mojaza, Plefka \cite{lmp}} \label{app:translation}
The paper \cite{lmp} investigates a potential hidden conformal symmetry
for GR amplitudes in general spacetime dimension,
and provides conceptual backing relating to soft dilatons.
No attempt is made to review these aspects here.
The purpose of this appendix is merely to 
clarify the relation between
the conjecture for GR amplitudes in \cite{lmp}
and the new formulation in terms of the differential operators $B_i$, $C_i$
in Theorem \ref{theorem:main}.
(Here we do not discuss the more tentative conjecture for YM amplitudes in \cite{lmp}.
Nor do we discuss the operators $A_i$
that are not part of the conjecture in \cite{lmp}.)
The notation used in this appendix uses notation from \cite{lmp},
notation from this paper,
and additional notation used only in this appendix.
\begin{itemize}
\item The conjecture we are referring to are equations (5.12) and (5.13) in \cite{lmp}
for general $n \geq 3$. The cases $n=3,4,5,6$ are separately discussed in \cite{lmp}.
\item 
In \cite{lmp},
the primary differential operators are certain conformal generators.
The special conformal generators
are expressed in terms of operators\footnote{%
Actually in \cite{lmp} the notation $F_i$, $G_i$ is the result
of applying certain operators to the amplitudes,
but in this appendix we use this notation for the differential operators themselves.}
$F_i$, $G_i$ in \cite{lmp}
closely related to $B_i$, $C_i$ in this paper respectively.
In fact, superficially,
the formulas in Appendix A of \cite{lmp}
coincide with the right hand sides of
\eqref{eqs:b}, \eqref{eqs:c}.
The symbols entering the formulas
do however not have exactly the same meaning
as we now discuss.
\item
In \cite{lmp},
ordinary partial derivatives
$\p_{s_{ij}}$, $\p_{w_{ij}}$, $\p_{e_{ij}}$
(the variables $s,w,e$ in \cite{lmp}
correspond to $k,c,e$ in this paper respectively)
on the ambient vector space are used
that are not compatible with the relations \eqref{eq:linrel}.
Let us refer to $A$ as the ambient vector space and to
$C \subset A$ as the subspace given by \eqref{eq:linrel}.
It is called `constraint surface' in \cite{lmp}.
Note that $F_i, G_i: \Frac(R_A) \to \Frac(R_A)$.
The conjecture in \cite{lmp}
is that the GR amplitudes
are annihilated by $\nonlocop(F_i)$, $\nonlocop(G_i)$ where,
in notation used only in this appendix,
\[
\nonlocop(X) = 
\textstyle\sum_{\proj} R X \proj E\;\;:\;\;\Frac(R_C) \to \Frac(R_C)
\]
where (all maps are linear):
\begin{itemize}
\item $E: \Frac(R_C) \to \Frac(R_A)$
extends functions to the ambient space,
it is the pullback along a specific projection $A \to A$ with image $C$.
This is referred to as `resolving the constraints' in \cite{lmp}
and for the specific choice of $E$ we refer to \cite{lmp}.
\item $\proj : \Frac(R_A) \to \Frac(R_A)$
is a permutation of the index set $I$,
and $\sum_\proj$ is a sum over all.
So $\sum_\proj \proj$ corresponds to symmetrization.
\item
$X : \Frac(R_A) \to \Frac(R_A)$ is one of the
partial differential operators of interest, $F_i$ or $G_i$ in the notation
of \cite{lmp},
on the ambient space.
\item $R : \Frac(R_A) \to \Frac(R_C)$ restricts functions
from $A$ to $C$\footnote{Actually $R$ is only partially defined
but we gloss over that since it is inconsequential here.}.
\end{itemize}
\item
Note that $\nonlocop(X)$ is not a partial differential operator on $C$ because
$\proj$ is not local.
But one can define a new, different operator $\locop(X)$ that
has the virtue of being local
in fact a polynomial differential operator on $C$; and that
nevertheless coincides with $\nonlocop(X)$ when acting on permutation invariant
elements of $\Frac(R_C)$. Namely\footnote{%
Here $\proj^{-1} : \Frac(R_C) \to \Frac(R_C)$.}
\[
\locop(X) = 
\textstyle\sum_{\proj} RX\proj E\proj^{-1} \;\;:\;\;\Frac(R_C) \to \Frac(R_C)
\]
Clearly $\nonlocop(X) f = \locop(X) f$ for all rational functions $f \in \Frac(R_C)$ that
are permutation invariant, $\proj f = f$ for all $\proj$.
But $\locop(X)$ is local,
because
$\locop(X) = \sum_\proj RXE_\proj$
where $E_\proj = \proj E\proj^{-1} : \Frac(R_C) \to \Frac(R_A)$
is simply the pullback along another projection
$A \to A$ with image $C$.
\item
The GR amplitudes are permutation invariant.
Therefore the conjecture in \cite{lmp} that $\nonlocop(F_i)$, $\nonlocop(G_i)$
annihilate the
GR amplitude is equivalent to $\locop(F_i)$, $\locop(G_i)$ annihilating the GR amplitude.
\item
The $B_i$, $C_i$
in \eqref{eqs:b}, \eqref{eqs:c} are,
up to normalization, the operators $\locop(F_i)$, $\locop(G_i)$
but presented directly as differential operators
on $C = C(I)$.
A detailed translation
to \eqref{eqs:b}, \eqref{eqs:c} is omitted.
There are terms proportional to the conformal dimension $\Delta$
in $F_i$, $G_i$, see \cite{lmp}, but they do not contribute
to $\locop(F_i)$, $\locop(G_i)$ which therefore are independent of $\Delta$.
\end{itemize}

{\footnotesize
}

\begin{thebibliography}{}
\bibitem{lmp}
Loebbert F., Mojaza M.~and Plefka J.,
\href{https://arxiv.org/abs/1802.05999}{%
JHEP 2018.5, 208 (2018)}\\
\emph{Hidden conformal symmetry in tree-level graviton scattering}
%
\bibitem{bcfw}
Britto R., Cachazo F., Feng B., Witten E.,
\href{https://arxiv.org/abs/hep-th/0501052}{%
Phys.~Rev.~Lett.~94 (18), 181602 (2005)}\\
\emph{Direct proof of the tree-level scattering amplitude
recursion relation in Yang-Mills theory}
%
\bibitem{m2}
Grayson D.R.~and Stillman M.E., \href{http://www.math.uiuc.edu/Macaulay2/}{%
www.math.uiuc.edu/Macaulay2/}\\
\emph{Macaulay2, a software system for research in algebraic geometry}
%
\bibitem{jl}
Castro-Jim\'enez F.J.~and Leykin A., 
\href{https://arxiv.org/abs/1110.0182}{arxiv.org/abs/1110.0182 (2011)}\\
\emph{Computing localizations iteratively}
%
\bibitem{z}
Zeilberger D.,
Journal of Computational and applied mathematics 32.3, 321-368 (1990)\\
\emph{A holonomic systems approach to special functions identities}
%
\bibitem{nr} N\"utzi A.~and Reiterer M.,
\href{https://arxiv.org/abs/1812.06454}{arxiv.org/abs/1812.06454 (2018)}\\
\emph{Scattering amplitudes in YM and GR as minimal model brackets\\
      and their recursive characterization}
%
\bibitem{dixon}
Dixon L.,
\href{https://arxiv.org/abs/hep-ph/9601359}{arxiv.org/abs/hep-ph/9601359 (1996)}\\
\emph{Calculating scattering amplitudes efficiently}
%
\bibitem{bbdf}
Bjerrum-Bohr N.E.J., Bourjaily J.L., Damgaard P.H.~and Feng B.,\\
\href{https://arxiv.org/abs/1605.06501}{Nuclear Physics B, 913, 964-986 (2016)}\\
\emph{Analytic representations of Yang-Mills amplitudes}
%
\bibitem{sonder}
S{{\o}}ndergaard T.,
\href{https://arxiv.org/abs/1106.0033}{Advances in High Energy Physics (2012)}\\
\emph{Perturbative gravity and gauge theory relations: a review}
\end{thebibliography}
\end{document}